\newcommand{\ket}[1]{\mathop{\left|#1\right>}\nolimits}
\newcommand{\bra}[1]{\mathop{\left<#1\,\right|}\nolimits}
\newcommand{\ketn}[1]{\mathop{|#1\rangle}\nolimits} 
\newcommand{\akb}[1]{| #1\rangle\!\langle #1 |}
\newcommand{\COMMENT}[1]{}
\DeclareMathOperator{\Tr}{Tr}
\DeclareMathOperator{\PP}{\mathbb{P}}
\DeclareMathOperator{\QIC}{\mathrm{QIC}}
\DeclareMathOperator{\QIL}{\mathrm{QIL}}
\DeclareMathOperator{\QCC}{\mathrm{QCC}}
\DeclareMathOperator{\CIC}{\mathrm{CIC}}
\DeclareMathOperator{\IC}{\mathrm{IC}}
\newcommand{\coins}{\mathrm{coins}}
\newcommand{\braket}[2]{\langle #1|#2\rangle}
\newcommand{\ketbra}[2]{|#1\rangle\langle #2|}
\newcommand{\qotp}{ QOTP }
\newcommand{\XX}{\mathbf{X}}
\newcommand{\YY}{\mathbf{Y}}
\newcommand{\calU}{\mathcal{U}}
\newcommand{\NN}{\mathbb{N}}
\newcommand{\ZO}{\ensuremath{\{0,1\}}}
\newcommand{\AND}{\hbox{AND}}
\newcommand{\DISJ}{\hbox{DISJ}}
\newcommand{\OR}{\hbox{OR}}
\newtheorem{theorem}{Theorem}
\newtheorem{definition}[theorem]{Definition}
\newtheorem{remark}[theorem]{Remark}
\newtheorem{proposition}[theorem]{Proposition}
\newtheorem{lemma}[theorem]{Lemma}
\newtheorem{corollary}[theorem]{Corollary}
\newtheorem{claim}[theorem]{Claim}
\newcommand{\eps}{\varepsilon}
\newcommand{\zo}{\{0,1\}}
\newcommand{\ie}{{\it i.e.},~}
\def\memls{{\mathrm{ML}}}
\def\coins{{\mathrm{C}}}
\def\oscoins{{\mathrm{C_1}}}
\title{The information cost of quantum memoryless protocols}
\author{Andr{\'e} Chailloux\footnote{Inria, Paris, 
	\texttt{andre.chailloux@inria.fr}},
	Iordanis Kerenidis\footnote{CNRS IRIF, Universit{\'e} Paris 7, 
	\texttt{jkeren@liafa.univ-paris-diderot.fr}},
	Mathieu Lauri{\`e}re\footnote{NYU-ECNU Institute of Mathematical Sciences at NYU Shanghai,
	\texttt{mathieu.lauriere@nyu.edu}}}
\begin{document}

\maketitle

\begin{abstract}
We consider memoryless quantum communication protocols, where the two parties do not possess any memory besides their classical input and they take turns performing unitary operations on a pure quantum state that they exchange between them. Most known quantum protocols are of this type and recently a deep connection between memoryless protocols and Bell inequality violations has been explored in~\cite{BCG+16}. We study the information cost of memoryless quantum protocols by looking at a canonical problem: bounded-round quantum communication protocols for the one-bit AND function. We prove directly a tight lower bound of $\Theta(\frac{\log k}{k})$ for the information cost of AND for $k$-round memoryless quantum protocols and for the input distribution needed for the Disjointness function.  

It is not clear if memoryless protocols allow for a reduction between the AND function and Disjointness, due to the absence of private workspaces. We enhance the model by allowing the players to keep in their private classical workspace apart from their classical input {\em also} some classical private coins.
Surprisingly, we show that every quantum protocol can be transformed into an equivalent quantum protocol with private coins that is perfectly private, i.e. the players only learn the value of the function and nothing more. Last, we consider the model where the players are allowed to use one-shot coins, i.e. private coins that can be used only once during the protocol. While in the classical case, private coins and one-shot coins are equivalent, in the quantum case, we prove that they are not. More precisely, we show that every quantum memoryless protocol with one-bit inputs that uses one-shot coins can be transformed into a memoryless quantum protocol without private coins and without increasing too much its information cost. Hence, while private coins always allow for private quantum protocols, one-shot coins do not.
\end{abstract}

%
%
%
%




\section{Introduction}\label{mland:sec:intro}
\subsection{Context}

In the model of communication complexity, two players, Alice and Bob, receive inputs and would like to solve some distributed task that depends on these inputs, while minimizing the number of bits they exchange. This model has deep connections to many areas of computer science, including data structures, circuit lower bounds and streaming algorithms~\cite{MR1426129-KushilevitzN1997}.

Recently, a lot of attention has been given to a different measure of complexity for communication protocols, namely the amount of information that is leaked about the players inputs during the protocol. The information cost of a protocol is always lower than the communication cost, since one communicated bit can carry at most one bit of information. It has proved to be one of the strongest techniques we have to lower bound the communication complexity of functions~\cite{MR2059642-Bar-YossefJKS-2004,MR2933311-BravermanR-2010-amortized,MR3424073-Braverman-2012,KLL+12}. 

One can also define the notions of communication and information complexity in the quantum setting, where the two players exchange quantum messages. While it is straightforward to define the communication cost of a quantum protocol as the number of qubits that the two players exchange, one has to be careful when defining the information cost of a quantum protocol. 

Besides some application-specific definitions~\cite{JRS03,JN14}, recently two main definitions have been put forward. Touchette \cite{Tou15} has defined a notion of quantum information cost (QIC) and has proved that it has a number of important properties, including that for any function the quantum information complexity, namely the information cost of the optimal quantum protocol that solves the function, equals the amortized communication complexity of the function. Kerenidis et al. \cite{MR3497091-KerenidisLLGR} proposed a different notion, the classical input information cost (CIC), that is more intuitively related to the information leakage of the protocol, but is smaller than the QIC notion. Very recently, Lauri{\`e}re and Touchette \cite{LauriereT16} clarified the relation between the two notions showing that while CIC measures how much information each player learns about the other's input during the protocol, the QIC measures, on top of this, the information the players forget during the protocol. 

While our understanding of the flow of information during a quantum protocol has deepened, both these notions remain difficult to use in practice. The main reason is that mathematically they both involve a quantum conditional mutual information, where the conditioning is on a quantum variable. This quantity is notoriously difficult to handle, even though there has been some recent breakthrough work on it~\cite{MR3397027-FawziR-2015}.

Here, we try to overcome this difficulty by looking at a rich subclass of quantum protocols that we call \emph{memoryless} quantum protocols. In these protocols, the two parties take turns performing unitary operations on a pure quantum state that they exchange between them. In other words they do not possess any memory and hence they do not keep anything in their private space apart from their classical input. 

There are many reasons why it is interesting to look at such protocols. First, almost all quantum protocols we know are memoryless. This includes all protocols in the simultaneous message passing model, eg. fingerprints for Equality~\cite{buhrman2001quantum}, and in the one-way model, e.g. Hidden Matching~\cite{MR2399531-Bar-YossefJK-2008,MR2476272-GavinskyKKRdW-2008}, but also the two-way protocol for Disjointness in~\cite{BCW98} and for Vector in Subspace~\cite{raz1999exponential,KR11}. We note that the optimal protocol for Disjointness by Aaronson-Ambainis~\cite{MR2322514-AaronsonA-2005} is not described as a memoryless protocol.  
Second, there is a deep connection between memoryless protocols and Bell inequality violations that has been explored in~\cite{BCG+16,laplante2016robust}. Third, moving towards implementations of quantum communication protocols and the realization of quantum networks, memoryless protocols can be much easier to implement as it has already been shown~\cite{XAW+15,GXY+16}. Last but not least, we will argue it may be easier to understand the flow of quantum information in memoryless protocols. For example, it is easy to see that the relation between the two notions, CIC and QIC, is in this case clear: for any memoryless protocol, QIC is exactly two times the CIC, since the players forget exactly as much as they learn. 
Note that forgetting is not necessarily a drawback of quantum protocols: forgetting can be required in order to obtain quantum communication speed-ups, as shown in~\cite{LauriereT16}.

\subsection{Contributions}

In our work, we study the information cost of memoryless quantum protocols by looking at a canonical problem: bounded-round quantum communication for the AND function, where the players receive one bit each and their goal is compute the AND of the inputs. 

One of the main reasons to study the AND function is its close relation to the Disjointness problem (DISJ), where the players receive one set each and their goal is to decide whether these two sets are disjoint. One can see DISJ as a function that takes as inputs two $n$-bit strings $x,y$ and returns the OR of the coordinate-wise AND of these strings, i.e. $\DISJ(x,y)=\OR ( \AND(x_1, y_1),\ldots, \AND(x_n, y_n))$. 
In the classical world, a very elegant lower bound for Disjointness using information-theoretic tools was given by Bar-Yossef et al. \cite{MR2059642-Bar-YossefJKS-2004} and its proof consists of the following two steps: first, one can reduce DISJ to AND. Namely, given a protocol for DISJ on inputs of size $n$, the players can construct a protocol to solve AND in the following way: they embed their one-bit inputs for AND in some random coordinate for DISJ, use their private coins to pick the remaining $(n-1)$ inputs uniformly from $\{ (0,0), (0,1), (1,0) \}$, and run the DISJ protocol. The output of DISJ for such inputs is the same as the output of the AND function. One can show this way that if the information cost of the DISJ protocol is $I$, then the information cost of the new protocol for AND is $I/n$. This implies the information complexity of DISJ is at least $n$ times the information complexity of AND for the above input distribution. The second stage, involves computing directly the information complexity of the AND function, and showing to be at least a constant, the tight $\Omega(n)$ lower bound for DISJ is obtained. This result not only gives a simpler proof of the DISJ lower bound but it has sparked the interest for the study of information complexity that has led to numerous advances~\cite{MR2933311-BravermanR-2010-amortized,MR3424073-Braverman-2012,MR3210776-BravermanGPW-2013-exactcomm,MR3388213-BravermanW-2015-odometer}.

In the quantum world, things are considerably more complicated. The first attempt to provide an information-theoretic proof of the bounded-round quantum communication complexity of DISJ was by Jain et al. \cite{JRS03}. In their work, they used a different information-theoretic notion from QIC and CIC and used it to reduce the DISJ problem to the AND problem. By directly lower bounding this quantity for the AND function they managed to get a lower bound of $\Omega(n/k^2)$, for any $k$-round protocol for DISJ. There is no clear way to improve this lower bound using their information-theoretic notion and this bound falls short of the optimal bound of $\Omega(n/k)$. Very recently,~\cite{MR3473340-BravermanGKMT-2015} provided a proof which gives an almost optimal bound of $\tilde \Omega(n/k)$ for $k$-round protocols for DISJ by reducing DISJ to AND and then using the already known lower bound for DISJ to lower bound the complexity of AND. Note that this proof does not provide a direct proof for the information complexity of AND. 

In our work we prove directly a tight lower bound for the information complexity of AND for memoryless protocols and for the input distribution needed for DISJ.  
More precisely, considering the input distribution $\calU_0$ defined by $\calU_0(x,y) = \frac{1}{3}$ for $(x,y) \neq (1,1)$ and $\calU_0(1,1) = 0$, we show the following, where $\CIC^{\memls}_{\calU_0,\eps,k}(\AND)$ is the minimum $\CIC$ achieved by a $k$-round memoryless quantum protocol computing AND with error at most $\eps$ on input distribution $\calU_0$.
\begin{restatable}{theorem}{pureANDCIC}\label{Theorem:CIC_pure}
	For any $\eps \in (0,1/2)$ and any integer $k$,
	$ \CIC^{\memls}_{\calU_0,\eps,k}(\AND) = \Theta_\eps\left(\frac{\log(k)}{k}\right).$
\end{restatable}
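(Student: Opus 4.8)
The plan is to prove matching upper and lower bounds. Fix a $k$-round memoryless protocol; for inputs $(x,y)\in\zo^2$ write $\ket{\psi_t^{xy}}$ for the (pure) message-register state after $t$ rounds, so $\ket{\psi_0^{xy}}=\ket{\psi_0}$ is fixed, round $t$ applies a unitary on the register depending only on the acting player's bit, and the output comes from a fixed measurement of $\ket{\psi_k^{xy}}$. Specializing the definition of $\CIC$ to this setting and to $\calU_0$ — where, conditioned on the non-acting player's bit, the acting player's bit is uniform and only two register states are possible — the information cost equals, up to a constant factor, $\sum_t h\!\big(\tfrac{1+c_t}{2}\big)+\sum_t h\!\big(\tfrac{1+c'_t}{2}\big)$, where $h$ is the binary entropy, the first sum runs over Alice's turns with $c_t=|\braket{\psi_t^{00}}{\psi_t^{10}}|$ and the second over Bob's turns with $c'_t=|\braket{\psi_t^{00}}{\psi_t^{01}}|$ (this is also where the identity ``$\mathrm{QIC}=2\,\CIC$ for memoryless'' from the introduction makes the bookkeeping transparent). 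Putting $\eta_t=1-c_t$, $\eta'_t=1-c'_t$ and using $h(\tfrac{1+c}{2})=\Theta\big((1-c)\log\tfrac1{1-c}\big)$ as $c\to1$, everything reduces to controlling $\sum_t\eta_t\log(1/\eta_t)+\sum_t\eta'_t\log(1/\eta'_t)$ for protocols correct with error $\eps$.

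For the upper bound I would exhibit an explicit protocol on a three-level register $\ket0,\ket1,\ket2$, started in $\ket0$, in which every $x=0$ unitary and every $y=0$ unitary is the identity: on her $j$-th turn Alice (if $x=1$) rotates $\mathrm{span}\{\ket0,\ket1\}$ by $(-1)^j\phi$, and on his turn Bob (if $y=1$) rotates $\mathrm{span}\{\ket1,\ket2\}$ by a small fixed angle $\gamma$; the output is ``$1$'' iff the measurement $\{\akb0,\;I-\akb0\}$ does not return $\ket0$. On $(0,0)$ and $(0,1)$ the register never leaves $\ket0$; on $(1,0)$ Alice's rotations cancel in consecutive pairs (Bob being idle between them), so the register stays within $\sin\phi$ of $\ket0$ at every round, giving $\eta_t=O(\phi^2)$ and $\eta'_t=0$. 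On $(1,1)$, Bob's rotations leak amplitude into $\ket2$ and, crucially, break the pairwise cancellation of Alice's rotations, so the $\ket2$-amplitude grows by $\Theta(\phi\gamma)$ on a constant fraction of rounds; choosing $\phi=\Theta_\eps(1/k)$ it reaches $\sqrt{1-\eps}$ by round $k$, yielding error $\le\eps$ on all four inputs, while $\CIC=\Theta(k)\cdot h(O(1/k^2))=\Theta_\eps(\log k/k)$. The only nonroutine point is the back-reaction of the growing $\ket2$-amplitude on the $\{\ket0,\ket1\}$-dynamics, which affects only constants. (For $k=O(1)$, where $\log k/k=\Theta(1)$, a one-round protocol suffices.)

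For the lower bound, pass to the rotating frame in which the $(0,0)$-trajectory is the constant $\ket0$: this replaces Alice's turn-$t$ unitary by a unitary $\tilde D_t$ ($=I$ if $x=0$) and Bob's by $\tilde E_t$ ($=I$ if $y=0$), so that $\ket{\psi_t^{10}}$ is the ordered product of Alice's difference unitaries up to round $t$ applied to $\ket0$, $\ket{\psi_t^{01}}$ likewise with Bob's, and $\ket{\psi_k^{11}}$ is the interleaving of all of them; call the interleaved sequence $W_1,\dots,W_k$. Writing $T$ for trace distance, unitary invariance and the triangle inequality give
\[
 T\big(\psi_k^{11},\psi_0\big)\;\le\;\sum_{t=1}^k T\big(W_t\ket0,\ket0\big).
\]
If $W_t$ is Alice's $m$-th difference unitary, it is an isometry sending $\ket{\psi^{10}_{t_{m-1}}}$ (within $O(\sqrt{\eta_{t_{m-1}}})$ of $\ket0$) to $\ket{\psi^{10}_{t_m}}$ (within $O(\sqrt{\eta_{t_m}})$ of $\ket0$), hence it moves $\ket0$ by $T(W_t\ket0,\ket0)=O(\sqrt{\eta_{t_{m-1}}}+\sqrt{\eta_{t_m}})$, and symmetrically for Bob; summing, $T(\psi_k^{11},\psi_0)=O\big(\sum_t\sqrt{\eta_t}+\sum_t\sqrt{\eta'_t}\big)$. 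On the other hand, correctness together with $\ket{\psi_k^{10}},\ket{\psi_k^{01}}$ being $O(\sqrt{\eta})$-close to $\ket0$ forces $T(\psi_k^{11},\psi_0)\ge1-2\eps-O(\sqrt{\eta})$, so $\sum_t\sqrt{\eta_t}+\sum_t\sqrt{\eta'_t}=\Omega_\eps(1)$. An elementary optimization — minimizing $\sum_t\eta_t\log(1/\eta_t)+\sum_t\eta'_t\log(1/\eta'_t)$ over at most $2k$ nonnegative reals subject to $\sum_t\sqrt{\eta_t}+\sum_t\sqrt{\eta'_t}=\Omega_\eps(1)$, whose optimum spreads the mass equally so that $\eta_t\equiv\eta'_t\equiv\Theta(1/k^2)$ — then yields $\CIC=\Omega_\eps(\log k/k)$.

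The step I expect to be the main obstacle is the lower bound's telescoping estimate, because the naive quantum ``cut-and-paste'' is false in this regime: the protocol constructed for the upper bound has $\ket{\psi_k^{10}},\ket{\psi_k^{01}}$ within $O(1/k)$ of $\ket0$ yet $\ket{\psi_k^{11}}$ at distance $\Omega(1)$, so one cannot conclude $\ket{\psi_k^{11}}\approx\ket0$ directly; one must instead charge the displacement of the $(1,1)$-trajectory round by round and exploit both the bounded number of rounds and the convexity that makes per-round overlaps of size $1-\Theta(1/k^2)$ each cost $\Theta(\log k/k^2)$ in $\CIC$. The delicate parts are that the per-round bound must be carried out at the level of density matrices so as to be insensitive to the phases $\tilde D_t,\tilde E_t$ put on $\ket0$, and that it must produce the sharp exponent $\sqrt{\eta_t}$ (a crude amplitude/operator-norm argument loses to $\eta_t^{1/4}$, which would give only $\log k/k^3$); obtaining $\sqrt{\eta_t}$ is exactly what the isometry observation above buys, and verifying it rigorously is the crux.
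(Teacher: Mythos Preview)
Your lower-bound argument is essentially the paper's own. The paper first proves the pure-state identity $I(X:M)=h_2\big(\tfrac{1-|\braket{\psi_0}{\psi_1}|}{2}\big)$ (Lemma~\ref{Lemma:PurePinsker}), then sets $b_i=\Delta(\ket{\psi^i_{00}},\ket{\psi^i_{10}})$ for odd $i$ (and the symmetric quantity for even $i$) and $\delta_i=\Delta(\ket{\psi^i_{01}},\ket{\psi^i_{11}})$, derives the telescoping $\delta_i\le\delta_{i-1}+b_{i-1}+b_i$ by precisely the unitary-invariance plus triangle-inequality step you describe, and obtains $\sum_i b_i\ge\tfrac12(1-2\eps)$; finally it uses concavity of $x\mapsto\sqrt{h_2^{-1}(x)}$ on $[0,0.4]$ (after a trivial case split when some round already has information $>0.4$) to turn $\sum_i b_i=\Omega_\eps(1)$ into $\CIC\ge\Omega_\eps(\log k/k)$. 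Your ``rotating frame'' and the bound $T(W_t\ket{0},\ket{0})=O(\sqrt{\eta_{t_{m-1}}}+\sqrt{\eta_{t_m}})$ are a cosmetic repackaging of the same telescoping (you track $\Delta(\psi^{11},\psi^{00})$ instead of $\Delta(\psi^{11},\psi^{01})$), and your ``elementary optimization'' is exactly the concavity step. So for the lower bound you are on the right track and nothing delicate is missing.

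Your upper-bound protocol, however, is broken. Write $z=a_1+ia_2$ for the $(\ket1,\ket2)$-amplitudes on input $(1,1)$; Bob's rotation is $z\mapsto ze^{i\gamma}$, and Alice's $j$-th rotation is, to leading order in $\phi$, the kick $z\mapsto z+(-1)^j a_0\,\phi$. As long as $a_0\approx1$ this gives the affine recurrence $z_m=(z_{m-1}+(-1)^m\phi)e^{i\gamma}$; substituting $w_m=(-1)^mz_m$ yields $w_m=-e^{i\gamma}w_{m-1}+\phi e^{i\gamma}$, whose iterates lie on a circle of radius $|w^*|$ about the fixed point $w^*=\phi e^{i\gamma}/(1+e^{i\gamma})$, so $|z_m|\le \phi/|\cos(\gamma/2)|$ for \emph{every} $m$. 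Thus $|a_0|^2\ge 1-O(\phi^2)$ throughout, the linearisation is self-consistent, and the final measurement returns $0$ with probability $1-O(\phi^2)$ on input $(1,1)$: the protocol is not correct. The alternating signs that keep $(1,0)$ quiet also kill any net drift on $(1,1)$. The paper does not build its own protocol; it quotes the Jain--Radhakrishnan--Sen single-qubit protocol (Figure~\ref{fig:algorithmAND}): Alice \emph{reflects} about a vector at angle $\theta=\Theta(1/k)$ from $\ket0$ when $x=1$, Bob \emph{reflects} about $\ket0$ when $y=1$. On $(1,0)$ the reflection squares to the identity (so $\eta_t=O(1/k^2)$), while on $(1,1)$ the two reflections compose to a genuine rotation by $2\theta$, which after $\Theta(k)$ rounds carries $\ket0$ to $\ket1$ exactly. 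That composition-of-reflections mechanism is what your alternating-rotation design lacks.
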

The upper bound in Theorem~\ref{Theorem:CIC_pure} comes from a protocol described in \cite{MR3473340-BravermanGKMT-2015} credited to Jain, Radhakrishnan and Sen. Note also that from~\cite{JRS03}, we could obtain a non-optimal bound of 
$ \CIC^{\memls}_{\calU_0,\eps,k}(\AND) = \Omega_\eps\left(1/k\right),
$
since the information-theoretic notion used in~\cite{JRS03} becomes equivalent to CIC for memoryless protocols.

The question is then whether we can lift the lower bound of Theorem~\ref{Theorem:CIC_pure} to memoryless quantum protocols for DISJ. The obvious way to try and do it is to start with a memoryless quantum protocol for DISJ and use it in order to construct a memoryless protocol for AND. However, there is a problem. To solve AND, the players are given one-bit inputs, say $x$ and $y$. But if they want to use a protocol solving DISJ over $n$ bits, they need to create $n-1$ inputs for each party distributed in a way such that the protocol for DISJ will actually compute $\AND(x,y)$. In the classical case, the players used private coins to choose the remaining inputs for DISJ, when we embed the AND function to it. In \cite{JRS03}, the players used a superposition of coins in order to choose these inputs. Now, if the players keep these superpositions in their workspace, then we lose the memoryless property of our protocols. On the other hand, if they send these superpositions to the other player, the information cost of the protocol might considerably increase. 

Since it is not obvious how to reduce DISJ to AND while retaining the memoryless property, similarly to the classical case where we do not know how to perform the reduction without the use of private coins, we slightly enhance our model to try to allow for this reduction to go through.  

More precisely, we look at the model where the players do not possess any memory and hence they do not keep anything in their private space apart from their classical input {\em and} some classical private coins. Note that one can also assume the players share public coins without changing the model. 
In the classical case, we do allow for private coins when we define the information cost of a protocol. In the quantum case, note that we cannot unitarily create classical coins. Allowing classical coins seems like a minimal addition to the model. One can see that the communication complexity in this new model is not different from the communication complexity in the model without coins. Indeed, any protocol with coins can be simulated by a protocol where the coins are created in superposition by the players without changing the communication cost.
But what about the information complexity? One one hand, the information complexity cannot increase, since we can always ignore the coins. Surprisingly, we show that it becomes as small as it can possibly be, namely, it equals the information revealed just by the value of the function. In other words, we say that any function can be computed privately.
In fact, we show that every quantum protocol can be turned into a quantum protocol with coins that has the same input-output behaviour as the original protocol and that is perfectly private, i.e. the players only learn the value of the function the protocol computes and nothing more.  


\begin{restatable}{theorem}{zeroCICPif}\label{thm:zeroCIC-Pif}
	For every quantum communication protocol $\Pi$, there exists a memoryless quantum protocol $\Pi'$ with private classical coins such that:
	\begin{itemize}
		\item on every input pair $(x,y)$, $\Pi'$ has the same output distribution as $\Pi$.
		\item the information cost of $\Pi'$ is only the information gained by Bob's output $\Pi_{out}$ in $\Pi$. This means that for every input distribution $\mu$, we have
		$\CIC_\mu(\Pi') = 
		I\left(\Pi_{out}(X,Y) :X|Y\right),
		$
	\end{itemize}
	where $(X,Y)$ is a random variable distributed according to $\mu$, $I$ denotes the (classical) conditional mutual information and $\Pi_{out}(x,y)$ is the (classical) random variable corresponding to Bob's output in $\Pi$ on input $(x,y)$. 
\end{restatable}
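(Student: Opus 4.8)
The plan is to build $\Pi'$ explicitly from $\Pi$; the engine is the quantum one-time pad, used to scramble everything that travels on the channel, with the pad keys stored in the players' private classical coins. I would organise the argument in three stages: (i) put $\Pi$ into a convenient normalized coherent form; (ii) simulate that form memorylessly while one-time-padding every transmission; (iii) read off the output distribution and the information cost.

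For stage (i), replace $\Pi$ by its standard coherent unitary form: a sequence of alternating unitaries $U^A_1,U^B_1,U^A_2,\dots$ acting on Alice's scratch register $A$, Bob's scratch register $B$, and a message register $C$, starting from $\ket{x}_A\ket{0}_C\ket{y}_B$, with Bob obtaining his output by measuring a designated sub-register at the end. To be sure the execution leaves behind no correlation with $(x,y)$ other than the output value, run the coherent $\Pi$, coherently copy (by CNOT) the output sub-register into a fresh register $O$ held by Bob, and then run $\Pi^{-1}$; Bob measures $O$ only at the very end, and all other registers are discarded. One checks directly that the output distribution on every input $(x,y)$ equals that of $\Pi$ and that the classical value Bob is left holding is exactly $\Pi_{out}(x,y)$.

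For stage (ii), implement this normalized protocol memorylessly: the joint register $W=ABC$ (together with $O$, once created) is passed back and forth, each player applying only their own unitaries and keeping no quantum data between turns — only the classical input and some private coins persist. Before every transmission the sender applies a fresh quantum one-time pad to the outgoing register using fresh key bits drawn from the private coins, the keys being handed off round to round so that the player who must next act on the register can strip the pad but no one else can; since a one-time-padded register is, without the key, the maximally mixed state, nothing that appears on the channel depends on $(x,y)$. Correctness is then immediate, since conditioned on the coins $\Pi'$ just runs the normalized $\Pi$, which gives the first bullet. For stage (iii) one splits $\CIC_\mu(\Pi')$ into what Alice learns about $Y$ and what Bob learns about $X$: Alice never measures, and after accounting for the pads and the clean-up (in particular that she forgets everything she transiently held), her view retains nothing about $Y$, so her contribution is $0$; Bob's view retains, thanks to the clean-computation normalization, only the value $\Pi_{out}(X,Y)$ as a function of $(X,Y)$, so his contribution is exactly $I\!\left(\Pi_{out}(X,Y):X\mid Y\right)$, and the sum is the claimed identity.

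The main obstacle, I expect, is precisely the coupling of correctness and privacy in stages (ii)–(iii): one must manage the one-time-pad keys so that the legitimate recipient can always undo the pad — otherwise the protocol neither runs correctly nor lets Bob eventually extract $\Pi_{out}$ — while the keys stay outside the registers that $\CIC$ actually scores, and then one must prove the matching bound that Bob's final view carries nothing about $X$ beyond $\Pi_{out}$. That last point is exactly why the clean-computation normalization in stage (i) is not optional: a naive coherent simulation would leave Bob holding an intermediate state of $\Pi$ that reveals strictly more about $X$ than $\Pi_{out}$ does, so one genuinely has to copy out the output and uncompute before closing the protocol.
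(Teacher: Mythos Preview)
Your proposal has a genuine gap in stage (ii), precisely at the point you flag as ``the main obstacle'': the key handoff. You want the receiver to be able to strip the pad (so the simulation is faithful) while the pad keys ``stay outside the registers that $\CIC$ actually scores.'' But these two requirements are incompatible under the definition of $\CIC$. When Alice sends $M_i$, the term that is scored is $I(M_i:X\mid Y,B_i)$, where $B_i$ is Bob's entire private register, and in a memoryless-with-coins protocol that register \emph{is} exactly $Y$ together with Bob's private coins $R^B$. So if Bob holds the key to Alice's pad (whether because it was ``handed off'' or because it was shared to begin with), the conditioning already includes that key, and conditioned on it the padded message determines the unpadded one; the term collapses to the original $I(M_i:X\mid Y)$ and nothing has been gained. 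Conversely, if Bob does \emph{not} hold Alice's key, the scored term is indeed zero, but then Bob cannot apply $U^B$ to the correct state and your ``conditioned on the coins, $\Pi'$ just runs the normalized $\Pi$'' claim fails. Sending the key along the channel, or in a later round, runs into the same dichotomy.

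The paper resolves this tension by a mechanism your plan does not contain: Alice and Bob use \emph{independent} private pad strings $t_A,t_B$; Bob blindly ``decrypts'' with $t_B$ even though Alice padded with $t_A$. With overwhelming probability $t_A\neq t_B$ and the run is garbage, but crucially every transmitted register is still maximally mixed from the receiver's side (it is padded with the \emph{other} player's unknown coins), so the $\CIC$ contribution is zero throughout. At the end the players compare $t_A$ and $t_B$ (this comparison itself leaks nothing about the inputs) and restart with fresh coins if they disagree; on the rare event $t_A=t_B$ the run is a faithful execution of $\Pi$ and Bob extracts the output. Your stage (iii) analysis would go through once this mechanism is in place, but without it stage (ii) does not yield a protocol that is simultaneously correct and low-$\CIC$.

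As a side remark, your clean-computation idea in stage (i) --- copy out the answer and run $\Pi^{-1}$ so the garbage is uncomputed --- is a legitimate alternative to what the paper does for the \emph{final} step (the paper instead one-time-pads the garbage with a fresh key $s_A$ so that Bob's final view is independent of it). Either device handles the garbage; neither addresses the round-by-round key problem above.
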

Although we call the protocol $\Pi'$ private, note that we are not interested in a cryptographic scenario where the players might deviate from the protocol: we are interested in studying the information of fixed protocols. 
In high level, in protocol $\Pi'$ Alice and Bob follow $\Pi$ but they use private coins to encrypt their messages. At the end of the protocol, if their coins were the same, Bob is able to output as in $\Pi$ and knows nothing else than this value. However, if their coins were different, our construction prevents them from getting any information about each other's input, in which case they just restart the process until they get the same coins. This construction yields a private protocol at the expense of a very high communication cost. 

Our results imply that given any function $f$, if we take for $\Pi$ the protocol where Alice just sends over $x$ to Bob who computes and output $f(x,y)$, we obtain a protocol $\Pi'$ that can perfectly compute $f$ with $\CIC$ only the information gained from $f(x,y)$. 

\begin{restatable}{corollary}{zeroCICPifcorofct}\label{thm:zeroCIC-Pif-corofct}
	For every input distribution $\mu$, and every positive integer $k$,
		$\CIC^{\memls,\coins}_{\mu,k}(f) = I\left( f(X,Y) :X|Y\right),$
	where $(X,Y)$ is a random variable distributed according to $\mu$, $I$ denotes the (classical) conditional mutual information, and $\CIC^{\memls,\coins}_{\mu,k}(f)$ denotes the minimum $\CIC$ achieved by a $k$-round memoryless quantum protocol with private classical coins to compute $f$ exactly on input distribution $\mu$.
\end{restatable}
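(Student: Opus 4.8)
The plan is to establish the claimed identity by proving the two matching inequalities separately, with the upper bound being an immediate instantiation of Theorem~\ref{thm:zeroCIC-Pif} and the lower bound a one-line data-processing argument that uses only the exactness of the protocol. For the upper bound I would apply Theorem~\ref{thm:zeroCIC-Pif} to the trivial protocol $\Pi$ in which Alice sends her input $x$ to Bob and Bob outputs $f(x,y)$; this $\Pi$ computes $f$ exactly and has $\Pi_{out}(x,y)=f(x,y)$ with probability $1$. The theorem then produces a memoryless quantum protocol $\Pi'$ with private classical coins whose output distribution agrees with that of $\Pi$ on every input pair — so $\Pi'$ also computes $f$ exactly — and with $\CIC_\mu(\Pi')=I(\Pi_{out}(X,Y):X|Y)=I(f(X,Y):X|Y)$, giving $\CIC^{\memls,\coins}_{\mu,k}(f)\le I(f(X,Y):X|Y)$. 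The one point needing care here is the round complexity of $\Pi'$: the construction behind Theorem~\ref{thm:zeroCIC-Pif} uses a restart loop, so I would either note that $\CIC^{\memls,\coins}_{\mu,k}$ is non-increasing in $k$, or check that the construction fits the stated round budget, or observe that the statement is read with the round count the construction produces.

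For the lower bound, let $\Pi'$ be any $k$-round memoryless quantum protocol with private classical coins computing $f$ exactly on $\mu$, and let $O$ be the register holding Bob's output at the end of $\Pi'$. Since $\Pi'$ has zero error, $f(X,Y)$ is a deterministic function of $(O,Y)$, so the data-processing inequality gives $I(O:X|Y)\ge I(f(X,Y):X|Y)$. By the definition and basic monotonicity properties of $\CIC$, the information cost is at least the information Bob's final state reveals about Alice's input conditioned on Bob's own input, hence $\CIC_\mu(\Pi')\ge I(O:X|Y)\ge I(f(X,Y):X|Y)$. Taking the minimum over all such $\Pi'$ yields $\CIC^{\memls,\coins}_{\mu,k}(f)\ge I(f(X,Y):X|Y)$, which together with the upper bound proves the corollary.

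I expect the main obstacle to be bookkeeping rather than mathematical depth. Concretely, one must unwind the formal definition of $\CIC$ carefully enough that the inequality $\CIC_\mu(\Pi')\ge I(O:X|Y)$ is self-evident (it is exactly the ``information Bob learns about $X$'' term, with data processing handling the rest), and one must reconcile the potentially large round complexity of the protocol produced by Theorem~\ref{thm:zeroCIC-Pif} with the parameter $k$ appearing in the statement; neither step is delicate, but both should be spelled out.
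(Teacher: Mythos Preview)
Your proposal is correct and matches the paper's approach exactly: the paper obtains the upper bound by applying Theorem~\ref{thm:zeroCIC-Pif} to the trivial one-message protocol where Alice sends $x$ and Bob outputs $f(x,y)$, and justifies the lower bound by the remark that ``any protocol leaks at least as much information as Bob's output leaks to himself,'' which is precisely your data-processing argument spelled out. You are in fact more careful than the paper, which does not address the round-complexity bookkeeping issue you flag concerning the restart loop in the construction of $\Pi'$.
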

Note that, in the case of AND, on distribution $\calU_0$ the output of $\AND(x,y) = x \wedge y$ is always $0$. Hence, by the above result, $\CIC^{\memls,\coins}_{\calU_0,k}(\AND) = 0$ for every integer $k$.

There are two sides to this result. On the one hand, adding classical coins to quantum protocols allows for perfectly private protocols. This is impossible in the classical world and shows how quantum communication can offer advantages over classical communication. On the other hand, allowing the players to use private coins without restrictions weakens the power of information complexity as a lower bound for quantum protocols. 

In order to try and salvage the notion of information complexity as a strong lower bound while allowing the players to use private coins, we consider 
an intermediate model where the players are allowed to use what we call \emph{one-shot coins}. These are private coins that can be used only once during the protocol (then the players forget them). In the classical setting, this assumption is not restrictive and does not change the communication complexity nor the information complexity: for every protocol with private coins, we can construct a protocol which has the same transcript and output behaviours and uses only one-shot coins. This construction is done by changing only the way the coins are used~\cite{LauriereT16}. 

In the quantum setting, this is not the case:  allowing general coins or one-shot coins can lead to very different information complexities. We, in fact, show that one-shot coins do not necessarily decrease the information cost of a general quantum protocol by much. 
More precisely, we denote by $\CIC^{\memls,\oscoins}_{\mu,\epsilon, k}(f) $ the minimum $\CIC$ achieved by a $k$-round memoryless quantum protocol with private one-shot coins that computes $f$ with error $\epsilon$ on input distribution $\mu$, and prove that

\begin{restatable}{theorem}{oneshotcoinsBound}\label{mland:thm:oneshotres}
	For every $k$-round memoryless quantum protocol $\Pi$ with one-shot coins, we can construct a $k$-round memoryless protocol $\Pi'$ without coins, which has the same behaviour distribution as $\Pi$ and such that 
	$
	\CIC_{\calU_0}(\Pi') 
	=
	O\big(\CIC_{\calU_0}(\Pi) \cdot \big( \log(k) +  \left| \log \CIC_{\calU_0}(\Pi)\right| \big) \big).
	$
\end{restatable}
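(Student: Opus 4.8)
The plan is to turn $\Pi$ into a coin-free protocol $\Pi'$ by implementing each one-shot coin coherently, and to exploit the large amount of freedom in \emph{how} one does so to keep the information cost from growing by more than the stated factor.

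\emph{Step 1: a coherent implementation of the one-shot coins.} First I would put $\Pi$ into a normal form in which, at every round, the acting player samples a fresh private one-shot coin, applies a controlled unitary to the message register $M$, and immediately forgets the coin; equivalently, round $i$ applies a fixed mixed-unitary channel $\mathcal E_i$ to $M$ (depending on that player's input), so the reduced message state evolves as $\rho_i^{xy}=\mathcal E_i(\rho_{i-1}^{xy})$ from the fixed initial state $\rho_0$. To build $\Pi'$, at round $i$ the acting player adjoins a fresh register $C_i$ and applies an isometry $V_i:M\to M\otimes C_i$ (controlled by her input); after round $i$ the register $C_i$ is simply kept as part of the message and forwarded unchanged for the rest of the protocol, being traced out only at the final output measurement. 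The crucial point is that $V_i$ need \emph{not} be a Stinespring dilation of $\mathcal E_i$: since $\Pi'$ is coin-free its message states $\lvert\Phi_j^{xy}\rangle$ are pure given the inputs, and we only need $\Tr_{C_i}\!\big[V_i\,\lvert\Phi_{i-1}^{xy}\rangle\!\langle\Phi_{i-1}^{xy}\rvert\,V_i^\dagger\big]$ to have $M$-marginal $\rho_i^{xy}$ for the valid $(x,y)$ with the current value of that player's input. This already guarantees that $\Pi'$ is a coin-free memoryless $k$-round protocol whose output distribution on every input equals that of $\Pi$ (``same behaviour''). Among all valid choices I would use Uhlmann's theorem to pick the $V_i$ so as to keep $\Pi'$'s message states for the different valid inputs as mutually close as possible.

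\emph{Step 2: the information cost in terms of overlaps, and the logarithmic overhead.} On $\calU_0$ (where, for each fixed value of one player's input, the other's input is either deterministic or uniform over two values) both $\CIC_{\calU_0}(\Pi)$ and $\CIC_{\calU_0}(\Pi')$ split into per-round conditional-mutual-information terms. Because $\Pi'$'s messages are pure given the inputs, the round-$j$ term of $\CIC_{\calU_0}(\Pi')$ equals, up to a universal constant, the binary entropy $h\!\big(\tfrac12(1+a_j)\big)$, where $a_j=\big\lvert\langle\Phi_j^{00}\vert\Phi_j^{10}\rangle\big\rvert$ at Alice's rounds and $a_j=\big\lvert\langle\Phi_j^{00}\vert\Phi_j^{01}\rangle\big\rvert$ at Bob's; moreover $a_j$ is left unchanged during the rounds of the player not ``responsible'' for it, since an isometry applied to both states preserves their inner product. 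The round-$j$ term of $\CIC_{\calU_0}(\Pi)$ is, up to the same constant, the Holevo quantity $\epsilon_j$ of the corresponding pair of \emph{reduced} states $\big(\rho_j^{00},\rho_j^{10}\big)$ (resp.\ $\big(\rho_j^{00},\rho_j^{01}\big)$); a small $\epsilon_j$ forces those reduced states to be within fidelity $1-O(\epsilon_j)$. The heart of the argument is then a lemma asserting that the Uhlmann freedom of Step 1 can be exercised so that $1-a_j=O(\epsilon_j)$ at \emph{every} round, i.e.\ a round where $\Pi$ leaks little is matched by a round where $\Pi'$'s relevant message states are correspondingly well aligned, \emph{without} the misalignment accumulating: where $\Pi$ ``re-randomizes'' $M$, one lets $\Pi'$ dump the old content of $M$ into a fresh register and re-purify, which re-aligns the two branches up to the current round's leakage; and one checks that optimizing the overlap relevant at round $j$ does not damage the one that becomes relevant at round $j+1$, which works precisely because that latter overlap is frozen during round $j$. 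Granting the lemma, $\CIC_{\calU_0}(\Pi')=\sum_{j=1}^k\Theta\!\big(h(\tfrac12(1+a_j))\big)=\sum_{j=1}^k O\!\big((1-a_j)\log\tfrac1{1-a_j}\big)$ with $\sum_{j=1}^k(1-a_j)=O\!\big(\sum_j\epsilon_j\big)=O\!\big(\CIC_{\calU_0}(\Pi)\big)$ over $O(k)$ rounds, so by concavity of $t\mapsto t\log(1/t)$ and Jensen's inequality
\[
\CIC_{\calU_0}(\Pi')\;=\;O\!\Big(\CIC_{\calU_0}(\Pi)\cdot\log\tfrac{k}{\CIC_{\calU_0}(\Pi)}\Big)\;=\;O\!\Big(\CIC_{\calU_0}(\Pi)\cdot\big(\log k+\lvert\log\CIC_{\calU_0}(\Pi)\rvert\big)\Big).
\]

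\emph{Main obstacle.} The delicate point is the lemma: closeness of the reduced states $\rho_j^{00},\rho_j^{10}$ does \emph{not} by itself supply a pair of close purifications that $\Pi'$ is allowed to carry, because reproducing $\Pi$'s reduced states exactly (needed for ``same behaviour'') constrains the isometries $V_i$. One has to pin down the right invariant — essentially that $\Pi'$ maintains, beside the genuine message content, a scratch register that can always be re-purified so the two branches stay aligned up to the leakage incurred so far within that round — and show it propagates round by round while keeping the dimensions $\dim C_i$ (hence the total number of qubits exchanged, which must remain finite but is otherwise unconstrained) in check. Getting the per-round misalignment genuinely \emph{linear} in the per-round leakage, rather than its square root, is exactly what converts a polynomial-in-$k$ overhead into the claimed logarithmic one.
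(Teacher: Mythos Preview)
Your high-level strategy matches the paper's exactly: replace the one-shot coins by a coherent purification, use Uhlmann freedom at each round to keep the two relevant pure branches aligned, and then control $\sum_j h_2\bigl(\tfrac{1-a_j}{2}\bigr)$ by a concavity argument identical in content to the paper's lemma bounding $\sum_i h_2(x_i)$ by $O\bigl(S\log k + S|\log S|\bigr)$. The per-round inequality you need, $1-a_j=O(\epsilon_j)$, is precisely the paper's ``purification lemma'': for $\rho=\tfrac12\sum_x |x\rangle\langle x|\otimes|\phi_x\rangle\langle\phi_x|_{AB}$ there exist unitaries $U_x$ on $A$ with $I(X{:}AB)_{\rho'}\le h_2\bigl(I(X{:}B)_\rho/2\bigr)$, obtained by taking $U_1=U_0^\dagger V$ where $V$ is the Uhlmann unitary making $\langle\phi_0|V|\phi_1\rangle=F(\Tr_A|\phi_0\rangle\langle\phi_0|,\Tr_A|\phi_1\rangle\langle\phi_1|)$.

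There is, however, a genuine gap in your construction as written. You declare the round-$i$ isometry to be $V_i:M\to M\otimes C_i$, acting only on the current message and adjoining a \emph{fresh} register. With this restriction the Uhlmann alignment you invoke is not available: the overlap $a_j=|\langle\Phi_j^{00}|\Phi_j^{10}\rangle|$ factors through the untouched registers $C_{<j}$, and the misalignment accumulated there cannot be undone. Concretely, if $|\Phi_{j-1}^{00}\rangle=|0\rangle_M|c_0\rangle$ and $|\Phi_{j-1}^{10}\rangle=|0\rangle_M|c_1\rangle$ with $|\langle c_0|c_1\rangle|=0.9$, and the round-$j$ channel is the identity (so $\epsilon_j=0$), then any $V_j^x$ acting only on $M$ yields $|\Phi_j^{x0}\rangle=|0\rangle_M|a_x\rangle_{C_j}|c_x\rangle$ and hence $a_j\le 0.9$, not $1$. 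So your claimed lemma $1-a_j=O(\epsilon_j)$ fails for this construction, and the ``frozen overlap'' heuristic does not rescue it.

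The fix, which is exactly what the paper does, is to let the round-$i$ operation act on the \emph{entire} accumulated environment $C_{\le i}$: first apply the canonical Stinespring dilation of $\mathcal E_i$ (so the $M$-marginal is automatically $\rho_i^{xy}$ for every $(x,y)$, guaranteeing identical behaviour), and then apply a compensation unitary $V_i^x$ on $C_{\le i}$ alone, chosen via Uhlmann. Because $V_i^x$ does not touch $M$, the behaviour is unchanged and there is no cross-constraint between different $y$'s; because $V_i^x$ has access to the whole environment, the purification lemma applies afresh at every round regardless of past compensations, so no ``propagation'' invariant is needed at all. Once you make this change, your Step~2 and the Jensen estimate go through exactly as you wrote, and you recover the paper's bound.
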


Informally, the proof will go as follows. We will transform a memoryless quantum protocol $\Pi$ with one-shot coins into a memoryless quantum protocol $\Pi'$ without any coins, such that $\Pi'$ will have the same outputs than $\Pi$ and without increasing the information cost too much. The transformation from $\Pi$ to $\Pi'$ will informally be the following:
\begin{enumerate}
	\item Quantize the coins from $\Pi$ \ie put them in quantum superposition in quantum registers.
	\item At each odd (or even) round, Alice (or Bob) applies the same transformation as in $\Pi$. Then, Alice (or Bob) would like to send all their quantum registers, including the coin registers, to the other player. Before doing that, Alice (or Bob) applies a compensation unitary that will limit the information cost increase that occurs because of the sending of all the quantum registers. 
\end{enumerate}

This result, combined with Theorem~\ref{Theorem:CIC_pure}, implies in particular that 
$ \CIC^{\memls, \oscoins}_{\calU_0,\eps,k}(AND) = \Theta_\eps\left(\frac{1}{k}\right).$ We see that while private coins allow for private protocols, one-shot coins not always do. The main open question is whether one-shot coins can be useful to reduce DISJ to AND or more generally prove some direct sum property for quantum information complexity.


\section{Preliminaries}

%
%
%
\subsection{Quantum Communication}

In the sequel we consider quantum protocols with classical inputs as defined in Figure~\ref{fig:qu_prot_CIC}. At the end of the protocol, Bob applies a measurement (POVM) on register $\mathcal O$ to obtain a classical output.

Let $k$ be an odd positive integer. A $k$-round protocol can be described as follows. At the outset of the protocol, Alice and Bob receive a classical input, in registers $X$ and $Y$ respectively. They will not modify the content of these registers. Alice starts with empty private memory and message registers, $A_0$  and $M_0$ respectively; Bob starts with an empty private memory register $B_0$.
At each odd round $1 \leq i \leq k$, Alice applies a unitary operation on her private memory register, $A_i$, and the message register, $M_i$. This operation is controlled by her (classical) input. She then sends $M_i$ to Bob. At each even round $i$, Bob applies a unitary operation, controlled by his input, on his private memory register, $B_i$, and the message register, $M_i$, and then sends $M_i$ to Alice. After round $k$ (Bob has just received message $M_k$ from Alice since $k$ is odd), Bob applies a unitary $U_{k+1}$ and measures part of his registers to obtain a classical output. For simplicity, let us denote
$
	B_{k+1} = U_{k+1}(B_kM_k).
$
We will split $B_{k+1}$ into registers $\mathcal G_B$ and $\mathcal O$  such that the part of $B_{k+1}$ that is measured by Bob at the end is $\mathcal O$.

\begin{center}
\scalebox{.8}{%
\begin{overpic}[width=1\textwidth]{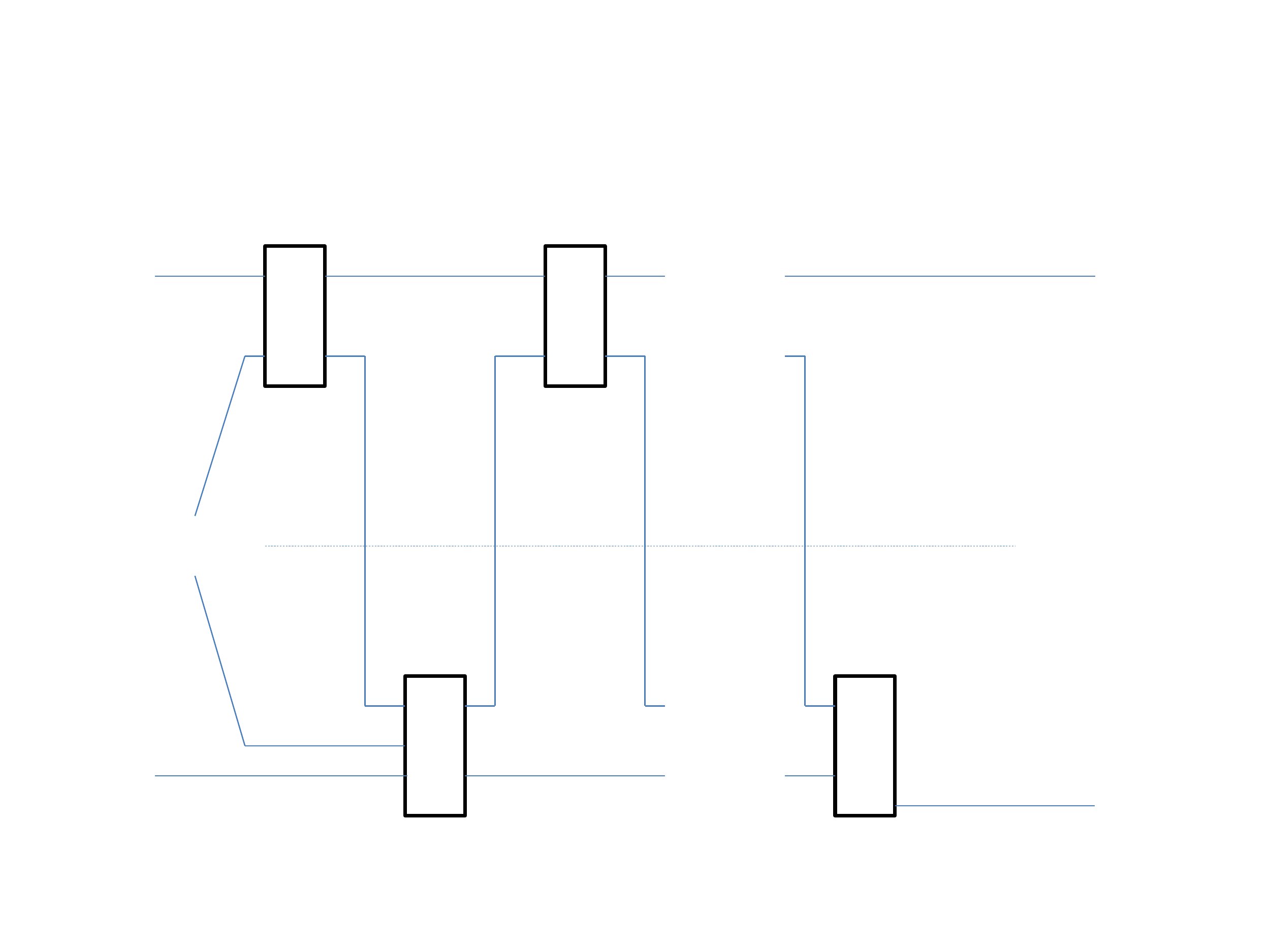}
  \put(0,45){Alice}
  \put(0,22){Bob}
  \put(15,54){\footnotesize$X$}
  \put(15,11.5){\footnotesize$Y$}
  \put(22.1,49.5){\footnotesize$U_1$}
  \put(15,31.5){\footnotesize$\ket{0}$}
  \put(27,54){\footnotesize$X A_1$}
  \put(33.5,54){\footnotesize$=$}
  \put(26.2,48){\footnotesize$M_1$}
  \put(33.5,15.5){\footnotesize$U_2$}
  \put(37,54){\footnotesize$X A_2$}
  \put(37.2,17.4){\footnotesize$M_2$}
  \put(37.2,11.5){\footnotesize$Y B_2$}
  \put(44,11.5){\footnotesize$=$}
  \put(44.2,49.5){\footnotesize$U_3$}
  \put(48.2,54){\footnotesize$X A_3$}
  \put(48.2,48){\footnotesize$M_3$}
  \put(48.2,11.5){\footnotesize$Y B_3$}
  \put(55,33){\footnotesize$\cdots$}
  \put(59.5,54){\footnotesize$X A_{k}$}
  \put(66,54){\footnotesize$=$}
  \put(59.5,48){\footnotesize$M_{k}$}
  \put(58,11.5){\footnotesize$Y B_{k}$}
  \put(66,15.5){\footnotesize$U_{k+1}$}
  \put(70,54){\footnotesize$X A_{k+1}$}
  \put(73.5,9.5){\footnotesize$Y B_{k+1} = Y \mathcal G_B \mathcal O$}
\end{overpic}
}
  \captionof{figure}{
  \footnotesize
Depiction of a quantum protocol with classical inputs in the interactive model, adapted from the long version of~\cite[Figure 1]{Tou15}.
The local operations $U_i$ are unitary operators. The $A_i$ registers are Alice's local memory in round $i$, similarly the $B_i$'s are Bob's, the $M_i$'s are the message registers, exchanged back-and-forth between them. $X$ and $Y$ are Alice's and Bob's respective input, distributed according to some prior $\mu$.
}
  \label{fig:qu_prot_CIC}
\end{center}

\begin{definition}[Quantum communication complexity]
\label{def:QCC}
The quantum communication cost of a quantum protocol $\Pi$ is:
$
	\QCC(\Pi) = \sum_{i } \log \dim(M_i),
$
 where $M_i$ is the quantum register that corresponds to the message sent over at round $i$. 

For a fixed function $f$, the average-case quantum communication complexity of $f$ relative to input distribution $\mu$ and with (distributional) error $\epsilon \in [0,1]$ is defined as
	\begin{equation*}
		\QCC_{\mu,\epsilon}(f) = \inf_{ \Pi} \QCC(\Pi)
	\end{equation*}
	where the infimum is taken over all protocols $ \Pi$ computing $f$ with average error at most $\epsilon$ with respect to input distribution $\mu$. The worst-case quantum communication complexity of $f$ with error $\epsilon$ is defined as
	\begin{equation*}
		\QCC_{\epsilon}(f) = \inf_{\Pi} \QCC(\Pi)
	\end{equation*}
	where the infimum is taken over all protocols $ \Pi$ computing $f$ with worst-case error at most $\epsilon$.
\end{definition}

\subsection{The Classical input Information Cost ($\CIC$) of quantum protocols}

Let us recall the usual definition of information cost in classical protocols (see e.g.~\cite{CSWY01,MR3424073-Braverman-2012}). 
The information cost of a  protocol $\Pi$ is the amount information revealed by the whole transcript of $\Pi$, that is, the messages and the public coins. For simplicity, we also denote $\Pi$ the random variable corresponding to the transcript of $\Pi$ on input $X,Y$. Then we have
\begin{equation*}
	\IC(\Pi) = I(X:\Pi | Y, R_B) + I(Y:\Pi | X, R_A).
\end{equation*}
By a chain rule argument, it is possible to rewrite IC as the sum over the rounds of the information that the receiver of a message learns about the input of the other player, conditioned on what he already knew before receiving the message (that is, the information contained in his input and the past messages).
\begin{definition}[Information cost]
\label{def:IC}
The information cost of a (randomized) classical protocol $\Pi$, over input distribution $\mu$, is
\begin{equation*}
	\IC(\Pi) = \sum_{i\,:\,\text{odd}} I(M_i:X|Y, R_B, M_{[i-1]}) + \sum_{i\,:\,\text{even}} I(M_i:Y|X, R_A, M_{[i-1]}),
\end{equation*}
where $M_{[i-1]} = M_1, \dots, M_{i-1}$ is the concatenation of all messages sent before round $i$.

For a fixed function $f$ the information complexity of $f$ relative to input distribution $\mu$ and with (distributional) error $\epsilon \in [0,1]$ is defined as
		$\IC_{\mu,\epsilon}(f) = \inf_{ \Pi } \IC_{\mu}(\Pi)$
	where the infimum is taken over all protocols $ \Pi$ computing $f$ with error at most $\epsilon$ with respect to $\mu$.
\end{definition}

Imitating this round-by-round definition of IC,~\cite{MR3497091-KerenidisLLGR} introduced a version of information cost for quantum protocols with classical inputs. 
It is the sum of the information that the message receiver learns about the sender's (classical) input at each round, conditioned on the registers held by the receiver. 

In the sequel, for a quantum state $\rho^{A,B,C}$ over registers $A,B$ and $C$ (i.e. $A,B,C$ are three disjoint finite dimensional quantum systems having some joint density matrix $\rho$), we denote by 
$\rho_{A} = \Tr_{BC}(\rho)$ the reduced density matrix of $A$ and by $S(A) = -\Tr(\rho_A \log(\rho_A))$ the von Neumann entropy of $A$. $I(A:B) = S(A) + S(B) - S(AB)$ denotes the quantum mutual information between $A$ and $B$ and 
$I(A:B|C) = I(A: B,C)-I(A: C)$ is the quantum condition mutual information between registers $A$ and $B$ conditioned on $C$.

With the notations introduced above for quantum protocols, this leads to the following definition. 
\begin{definition}[Classical input information complexity~\cite{MR3497091-KerenidisLLGR}]
\label{def:CIC}
The classical input information cost of a quantum protocol $\Pi$ over input distribution $\mu$ is:
\begin{equation*}
	\CIC_\mu(\Pi) = \sum_{i\,:\,\text{odd}} I(M_i:X|Y, B_i) + \sum_{i\,:\,\text{even}} I(M_i:Y|X, A_i).
\end{equation*}

For a fixed function $f$ the classical input quantum information complexity of $f$ relative to input distribution $\mu$ and with (distributional) error $\epsilon \in [0,1]$ is defined as
	\begin{equation*}
		\CIC_{\mu,\epsilon}(f) = \inf_{ \Pi} \CIC_{\mu}(\Pi)
	\end{equation*}
	where the infimum is taken over all protocols $ \Pi$ computing $f$ with error at most $\epsilon$ with respect to $\mu$.
\end{definition}
Touchette \cite{Tou15} has defined a different notion of quantum information cost ($\QIC$), but as shown in~\cite{LauriereT16}, $\QIC$ and $\CIC$ are equivalent within a factor of 2. In fact, for memoryless protocols $\QIC$ is exactly two times $\CIC$, and hence we only look at $\CIC$ in the rest of the paper.


\subsection{Measures of distance between quantum states}

Let us recall a few tools about classical-quantum states and measures of distance between two quantum states. If $C$ is a classical random variable taking the classical value $\ket c$ with probability $p_c$,  then $I(A : B | C) = \sum_c p_c \, I(A^c : B^c)$, where $(AB)^c$ denotes the joint density matrix of $A$ and $B$ when $C=\ket c$. We also write $I(A:B|C=c)$ for $I(A^c :B^c)$.
\begin{definition}
Let $\rho$ and $\sigma$ be density matrices in the same finite dimensional Hilbert space. The trace distance between $\rho$ and $\sigma$ is defined as
	$
	\Delta(\rho, \sigma) = \frac 12 \Tr\left(\sqrt{(\rho-\sigma)^\dagger (\rho-\sigma)}\right),
	$
where $\dagger$ denotes the conjugate-transposition.
\end{definition}
We recall that $\Delta$ is preserved under unitary transformations: for every unitary operator $U$, and every $\rho$ and $\sigma$ as above,
$
	\Delta( U \rho U^\dagger, U \sigma U^\dagger) = \Delta(\rho, \sigma).
$
When the states are pure, say $\rho = \ketbra\psi\psi$ and $\sigma = \ketbra\phi\phi$, then
\begin{equation}\label{eq:TrdistPure}
	\Delta(\rho,\sigma) = \frac 12 \sqrt{1 - |\braket \psi\phi|^2}.
\end{equation}
\begin{definition}
	Let $\rho$ and $\sigma$ be density matrices in the same finite dimensional Hilbert space $H$. The fidelity (also called Bhattacharyya coefficient) between $\rho$ and $\sigma$ is defined as
 $F(\rho,\sigma) = \Tr\left( \sqrt{\sqrt{\rho} \, \sigma \sqrt{\rho}} \right)$.
\end{definition}
By Uhlmann's theorem, we also have the following characterization: $F(\rho,\sigma) = \sup_{K, \ket \phi, \ket \psi} |\braket\phi\psi|$, where the maximization is over Hilbert space $K$ and purification $\ket\phi$ and $\ket\psi$ in $H \otimes K$ of $\rho$ and $\sigma$ respectively.
We will make use of the following result.
\begin{lemma}[see e.g. \cite{NC00}]\label{fact:measureH}
%
	Let $\rho$ and $\sigma$ be density matrices in the same finite dimensional Hilbert space $H$. Let $E = \{E_m\}_m$ be a POVM on $H$. Let $p_m = \Tr(\rho E_m)$ and $q_m = \Tr(\sigma E_m)$. We have 
	$
	\Delta(\rho, \sigma) \ge \frac{1}{2}\sum_m |p_m - q_m|.
	$
\end{lemma}

\subsection{Memoryless quantum communication}

A particular class of quantum communication protocols are those where the players send back and forth the message register without keeping any information in their private memory register apart from their classical input.
\begin{definition}
A quantum communication protocol is said to be a memoryless protocol if and only if the following condition is satisfied: the players' private registers, namely $A_i,B_i$ (see Fig. \ref{fig:qu_prot_CIC}), are always empty, meaning that their working registers correspond only to the message registers. This implies that at each round, the sender will always send all the registers he or she currently has except the input register to the other player.
\end{definition}
While at first hand, this family of protocols seems restricted, since messages can not be entangled with private memory registers (nor with the environment), they constitute a powerful and worth-studying class of protocols. 
As we have said in Section~\ref{mland:sec:intro}, almost all quantum protocols we know are memoryless. Moreover, there is a deep connection between memoryless protocols and Bell inequality violations and such protocols can be much easier to implement. Last but not least, we will see in the following sections that it is easier to understand the flow of quantum information in memoryless protocols, for example, the information that is learnt in the protocol is exactly the same as the information been forgotten.
 
 Let us denote by ${\mathcal T}_{\mu,\epsilon,k}^{\memls}(f)$ the set of all $k$-round memoryless protocols $ \Pi$ computing function $f$ with error at most $\epsilon$ with respect to $\mu$.
\begin{definition}\label{def:QCC1}
	The $k$-round $\epsilon$-error memoryless quantum communication complexity of a function $f$ relative to input distribution $\mu$ is defined as
		$
		\QCC_{\mu,\epsilon,k}^{\memls}(f) = \inf_{ \Pi \in {\mathcal T}_{\mu,\epsilon,k}^{\memls}(f)} \QCC_{\mu}(\Pi).
		$
\end{definition}

\begin{definition}\label{def:QIC1}
	The $k$-round $\epsilon$-error memoryless classical input information complexity of $f$ relative to input distribution $\mu$ is defined as
		$
		\CIC_{\mu,\epsilon,k}^{\memls}(f) = \inf_{ \Pi \in {\mathcal T}_{\mu,\epsilon,k}^{\memls}(f)} \CIC_{\mu}(\Pi).
		$
\end{definition}

In \cite{JRS03}, the authors used a notion of quantum information loss  (denoted QIL here) of a protocol $\Pi$, defined, with the notation introduced above, by:
$$
	\QIL_{\mu,\epsilon,k}(\Pi) = \sum_{i \,:\, \mathrm{odd}} I(X : M_i B_i |Y) + \sum_{i \,:\, \mathrm{even}} I(Y : M_i A_i |X). 
$$
For memoryless protocols, private memory registers $A$ and $B$ are empty, so this notion coincides with the notion of classical input information cost ($\CIC$).


\section{The CIC of bounded-round memoryless protocols for AND}\label{mland:sec:andpure}

In this section, we prove the following result, where $\calU_0$ denotes the uniform distribution over $\{(0,0), (0,1), (1,0)\}$.

\pureANDCIC*

This is a consequence of our lower bound in Proposition~\ref{prop:ICeps} and the upper bound in \cite{JRS03,MR3473340-BravermanGKMT-2015} (see Section~\ref{qvariants-lbtight}).

\subsection{Optimal lower bound for the information cost of a pure message}

We first study the information cost of a single quantum message which is in a pure state.
Let us recall that, using Pinsker's inequality, we have the following result.
\begin{lemma}[see~\cite{MR1686254-FuchsG-1999}]
\label{lem:mi-fid}
Suppose $X$ and $Q$ are disjoint quantum systems, where $X$ is a classical random variable uniformly distributed over $\ZO$ and $Q$ is a quantum encoding $x \rightarrow \sigma_x$ of $X$. Then, the following relationship holds between mutual information and fidelity: 
$$I(X : Q) \geq 1-F(\sigma_0, \sigma_1).$$
\end{lemma}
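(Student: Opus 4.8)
The plan is to compute $I(X:Q)$ exactly as a Holevo quantity and then, viewed as a function of the fidelity, compare it with $1-F$. Writing the joint classical--quantum state as $\rho_{XQ}=\tfrac12\ketbra{0}{0}\otimes\sigma_0+\tfrac12\ketbra{1}{1}\otimes\sigma_1$ and using that $X$ is classical, one gets
\[
I(X:Q)=S(\bar\sigma)-\tfrac12 S(\sigma_0)-\tfrac12 S(\sigma_1),\qquad \bar\sigma:=\tfrac12(\sigma_0+\sigma_1),
\]
so the task is to lower bound the right-hand side by $1-F(\sigma_0,\sigma_1)$.

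First I would treat the case of \emph{pure} messages $\sigma_x=\ketbra{\psi_x}{\psi_x}$, which is the one actually used later in the section. Set $F:=F(\sigma_0,\sigma_1)=|\braket{\psi_0}{\psi_1}|$. Working in the at-most-two-dimensional span of $\ket{\psi_0}$ and $\ket{\psi_1}$, a short computation shows that $\bar\sigma$ has eigenvalues $\tfrac{1+F}{2}$ and $\tfrac{1-F}{2}$, so $S(\bar\sigma)=H_2\!\big(\tfrac{1+F}{2}\big)$ (with $H_2$ the binary entropy) while $S(\sigma_0)=S(\sigma_1)=0$; hence $I(X:Q)=H_2\!\big(\tfrac{1+F}{2}\big)$ exactly. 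Now $F\mapsto H_2\!\big(\tfrac{1+F}{2}\big)$ is concave on $[0,1]$ (an affine reparametrisation of the concave $H_2$) and takes the values $1$ at $F=0$ and $0$ at $F=1$, so it lies above the chord joining these two points, namely above $F\mapsto 1-F$. This gives $I(X:Q)\ge 1-F$ and settles the pure case.

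For general (possibly mixed) $\sigma_x$ I would reduce to a classical statement. By the standard variational characterisation of fidelity (the analogue of Lemma~\ref{fact:measureH}), there is a POVM $\{E_m\}$ with $F(\sigma_0,\sigma_1)=\sum_m\sqrt{\Tr(\sigma_0E_m)\,\Tr(\sigma_1E_m)}$. Measuring $Q$ with it yields a classical outcome $Y$ with conditional distributions $p_0,p_1$, and by the data-processing inequality $I(X:Q)\ge I(X:Y)$, the latter being the classical Jensen--Shannon divergence of $p_0$ and $p_1$. It then suffices to prove the elementary inequality $I(X:Y)\ge 1-\sum_m\sqrt{p_0(m)p_1(m)}$: rewriting the right-hand side as $\tfrac12\sum_m\big(\sqrt{p_0(m)}-\sqrt{p_1(m)}\big)^2$ and using that both sides are positively homogeneous of degree one in each pair $(p_0(m),p_1(m))$, this reduces term by term to the scalar bound $H_2(t)\le 2\sqrt{t(1-t)}$ for $t\in[0,1]$. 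Since the POVM was chosen so that $\sum_m\sqrt{p_0(m)p_1(m)}=F(\sigma_0,\sigma_1)$, the claim follows.

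I expect the pure case to be essentially free (it is the concavity observation above), and the general case to be where the work lies: one needs the min-over-measurements formula for the fidelity and the scalar estimate $H_2(t)\le 2\sqrt{t(1-t)}$ underlying the classical Jensen--Shannon bound --- itself a sharpening, near $t\in\{0,1\}$, of Pinsker's inequality, since a naive application of quantum Pinsker to $I(X:Q)=\tfrac12 D(\sigma_0\|\bar\sigma)+\tfrac12 D(\sigma_1\|\bar\sigma)$ (with $D$ the relative entropy) only yields $I(X:Q)=\Omega((1-F)^2)$, which is too weak near $F=1$. A measurement-free alternative for the general case --- bounding $D(\sigma_i\|\bar\sigma)\ge-2\log F(\sigma_i,\bar\sigma)$ via $D\ge D_{1/2}$ --- handles pure states cleanly (it gives $I(X:Q)\ge\log\tfrac{2}{1+F^2}\ge 1-F$) but for mixed states requires a further, less transparent, upper bound on $F(\sigma_i,\bar\sigma)$.
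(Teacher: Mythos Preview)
The paper does not actually prove this lemma: it is stated with a citation to Fuchs--van de Graaf and prefaced only by the phrase ``using Pinsker's inequality, we have the following result.'' So there is no proof in the paper to compare your proposal against.

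Your argument is correct. A couple of remarks. First, your pure-state computation $I(X:Q)=H_2\!\big(\tfrac{1-F}{2}\big)$ is exactly what the paper proves independently as Lemma~\ref{Lemma:PurePinsker}, so that part is in complete agreement. Second, your observation that a naive application of quantum Pinsker to $I(X:Q)=\tfrac12 D(\sigma_0\|\bar\sigma)+\tfrac12 D(\sigma_1\|\bar\sigma)$ only yields an $\Omega((1-F)^2)$ bound is well taken; the paper's one-line attribution to ``Pinsker's inequality'' is loose, and the measurement route you sketch (choosing a POVM that attains $F(\sigma_0,\sigma_1)=\sum_m\sqrt{p_0(m)p_1(m)}$, applying data processing, and then the scalar inequality $H_2(t)\le 2\sqrt{t(1-t)}$) is the standard way to get the linear bound in the general mixed case. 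One small caveat: the variational formula for the fidelity you invoke is not literally ``the analogue of Lemma~\ref{fact:measureH}'' (that lemma is about trace distance); it is a separate fact, usually attributed to Fuchs--Caves, that the fidelity equals the minimum classical Bhattacharyya coefficient over all POVMs.
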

In the sequel, for a quantum state $Q$ depending on classical inputs in $\XX \times \YY$, we denote by $Q^{x,y}$ the value of $Q$ when the inputs are $(x,y)$. The above result leads to the following corollary, which applies to the general situation where $M$ is a quantum message which is non necessarily in a pure state.
\begin{corollary}\label{mland:coro:mutualInfoLB-usual}
	Let $X,Y$ be two classical random variables taking value in $\zo$. Let $M$ be a quantum encoding of $(X,Y)$. Then
$
	I(X : M | Y=y) \geq 1 - F(M^{1y}, M^{0y}).
$
\end{corollary}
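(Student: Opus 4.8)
The plan is to obtain Corollary~\ref{mland:coro:mutualInfoLB-usual} as the conditional version of Lemma~\ref{lem:mi-fid}, simply by fixing $y$. By definition of conditional mutual information with a classical conditioning variable, $I(X : M \mid Y = y)$ is the mutual information of the classical-quantum state of $(X,M)$ conditioned on $Y = y$, that is, of $\rho = \sum_{x\in\zo}\Pr[X=x\mid Y=y]\,\ketbra{x}{x}\otimes M^{xy}$. In the regime we need, $X$ conditioned on $Y=y$ is uniform over $\zo$, so this is exactly the situation of Lemma~\ref{lem:mi-fid} with the quantum encoding $x\mapsto \sigma_x := M^{xy}$. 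The lemma then gives $I(X:M\mid Y=y)\ge 1-F(\sigma_0,\sigma_1)=1-F(M^{1y},M^{0y})$, using that $F$ is symmetric; this is the claim.

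There is one point to be careful about rather than a genuine obstacle. Lemma~\ref{lem:mi-fid} is already stated for an \emph{arbitrary} (possibly mixed) encoding $x\mapsto\sigma_x$, so no extra argument is needed to allow $M$ to be a non-pure message — which is precisely why the corollary is claimed in that generality. The hypothesis that is actually used is that $X$ is uniform conditioned on $Y=y$: without it the bound can fail, since if $X$ is nearly deterministic given $Y=y$ then $I(X:M\mid Y=y)\to 0$ while $F(M^{0y},M^{1y})$ may stay bounded away from $1$. In the application (the $k$-round lower bound for $\AND$ over $\calU_0$), the corollary will only be invoked for the value of $y$ for which the $\calU_0$-conditional law of $X$ is uniform, so this costs nothing.

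All the quantitative content lives in Lemma~\ref{lem:mi-fid} (itself a consequence of Pinsker's inequality together with the trace-distance/fidelity relations recalled above), so beyond unwrapping the definition of $I(X:M\mid Y=y)$ and matching it to the hypothesis of that lemma, I do not anticipate any difficulty.
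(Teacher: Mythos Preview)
Your proposal is correct and matches the paper's approach: the paper gives no explicit proof and simply presents the statement as an immediate corollary of Lemma~\ref{lem:mi-fid}, which is exactly what you do by conditioning on $Y=y$ and applying the lemma to the encoding $x\mapsto M^{xy}$. Your observation that the uniformity of $X$ given $Y=y$ is an implicit hypothesis (absent from the corollary's statement but required by Lemma~\ref{lem:mi-fid}) is a valid and useful clarification.
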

In this section, we prove a tighter inequality (i.e. a larger lower bound) that holds when $M$ is a pure state. Let $h_2$ be the binary entropy function:
$
	h_2 : [0,1] \to [0,1], \, p \mapsto -p \log p - (1-p) \log (1-p)
$
 where $\log$ denotes the base $2$ logarithm, with the convention that $0 \log 0 = 0$, and let $h_2^{-1}$ be its inverse function from $[0,1]$ to $[0, 0.5]$.

\begin{lemma}\label{Lemma:PurePinsker}
Consider the bipartite quantum state over two registers, $X$ and $M$, 
$
	\rho = \sum_{x \in \zo} \frac{1}{2} \ket{x}\bra{x}_X \otimes \ket{\psi_{x}}\bra{\psi_{x}}_M,
$
for some pure states $\ket{\psi_{x}}$, $x = 0,1$.
Then 
$$
	I(X:M) = h_2\left(\frac{1 - |\braket{\psi_{0}}{\psi_1}|}{2}\right).
$$
\end{lemma}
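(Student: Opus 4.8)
The plan is to compute $I(X:M)$ directly from the definition $I(X:M) = S(X) + S(M) - S(XM)$. Since $X$ is a uniform classical bit, $S(X) = 1$. Because $\rho$ is block-diagonal with respect to the classical register $X$, with each block a pure state, the joint state $\rho_{XM}$ has eigenvalues $\tfrac12, \tfrac12$, so $S(XM) = 1$. Hence the whole computation reduces to $I(X:M) = S(M)$, where $\rho_M = \tfrac12 \ketbra{\psi_0}{\psi_0} + \tfrac12 \ketbra{\psi_1}{\psi_1}$.

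The remaining step is to diagonalize this $2$-term mixture of pure states. The support of $\rho_M$ is at most two-dimensional, spanned by $\ket{\psi_0}$ and $\ket{\psi_1}$, so I would work in an orthonormal basis of that span (e.g. apply Gram--Schmidt to $\ket{\psi_0}, \ket{\psi_1}$) and write $\rho_M$ as a $2\times 2$ matrix. Its two nonzero eigenvalues are $\tfrac{1 \pm \sqrt{1 - |\braket{\psi_0}{\psi_1}|^2}}{2}$: indeed, $\operatorname{Tr}(\rho_M) = 1$ fixes the sum, and $\operatorname{Tr}(\rho_M^2) = \tfrac12(1 + |\braket{\psi_0}{\psi_1}|^2)$ fixes the sum of squares, which pins down the pair. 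Therefore
\begin{equation*}
	S(M) = h_2\!\left(\frac{1 - \sqrt{1 - |\braket{\psi_0}{\psi_1}|^2}}{2}\right).
\end{equation*}
Finally I would simplify the argument of $h_2$: writing $c = |\braket{\psi_0}{\psi_1}|$, one checks the algebraic identity $\frac{1 - \sqrt{1-c^2}}{2} = \left(\frac{1-c}{2}\right)\cdot\frac{2}{1 + \sqrt{1-c^2}}$ is not quite the target form, so instead I would verify directly that $h_2\!\left(\frac{1-\sqrt{1-c^2}}{2}\right)$ and $h_2\!\left(\frac{1-c}{2}\right)$ agree — this follows because $h_2(p)$ depends on $p$ only through $p(1-p)$, and $\frac{1-\sqrt{1-c^2}}{2}\cdot\frac{1+\sqrt{1-c^2}}{2} = \frac{c^2}{4} = \frac{1-c}{2}\cdot\frac{1+c}{2}$, so the two arguments are the two roots of the same quadratic $t(1-t) = c^2/4$ and hence give the same value of $h_2$. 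This yields the claimed formula.

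The only mild subtlety — not really an obstacle — is the phase: $\braket{\psi_0}{\psi_1}$ may be complex, but only $|\braket{\psi_0}{\psi_1}|$ enters, which is transparent once one computes $\operatorname{Tr}(\rho_M^2)$. I expect the step requiring the most care is the last one, matching $h_2\!\left(\frac{1-\sqrt{1-c^2}}{2}\right)$ to $h_2\!\left(\frac{1-c}{2}\right)$ via the symmetry $h_2(t) = h_2(1-t)$ together with the observation that both arguments solve $t(1-t) = c^2/4$; everything else is a direct eigenvalue computation.
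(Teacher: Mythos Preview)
Your overall strategy---reduce $I(X:M)$ to $S(\rho_M)$ with $\rho_M=\tfrac12\ketbra{\psi_0}{\psi_0}+\tfrac12\ketbra{\psi_1}{\psi_1}$, then diagonalize this rank-$\le 2$ operator---is exactly the paper's approach. But there is a genuine algebra slip that propagates.

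The eigenvalues of $\rho_M$ are $\tfrac{1\pm c}{2}$ with $c=|\braket{\psi_0}{\psi_1}|$, \emph{not} $\tfrac{1\pm\sqrt{1-c^2}}{2}$. From your own constraints $\lambda_1+\lambda_2=1$ and $\lambda_1^2+\lambda_2^2=\tfrac{1+c^2}{2}$ you get $\lambda_1\lambda_2=\tfrac{1-c^2}{4}$, hence $\lambda_{1,2}=\tfrac{1\pm\sqrt{1-4\cdot(1-c^2)/4}}{2}=\tfrac{1\pm c}{2}$. A quick sanity check exposes the error: for $c=1$ the state $\rho_M$ is pure, so the eigenvalues must be $1,0$; your formula gives $\tfrac12,\tfrac12$.

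The subsequent ``matching'' step then fails for the same reason: $\tfrac{1-c}{2}\cdot\tfrac{1+c}{2}=\tfrac{1-c^2}{4}$, which is \emph{not} $\tfrac{c^2}{4}$, so the two arguments do \emph{not} share the same value of $t(1-t)$ and $h_2$ does \emph{not} identify them. You only land on the correct formula because this second mistake undoes the first. With the correct eigenvalues $\tfrac{1\pm c}{2}$ the entropy is $h_2\!\big(\tfrac{1-c}{2}\big)$ immediately, and the whole final paragraph is unnecessary.

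For comparison, the paper diagonalizes $\rho_M$ by exhibiting the orthonormal eigenvectors $\ket{\psi_\pm}\propto\ket{\psi_0}\pm\ket{\psi_1}$ and reading off eigenvalues $\cos^2(\alpha/2),\sin^2(\alpha/2)$ with $\cos\alpha=|\braket{\psi_0}{\psi_1}|$; the half-angle identities then give $\tfrac{1\pm c}{2}$. Your trace/trace-square route is arguably cleaner (and handles a complex phase in $\braket{\psi_0}{\psi_1}$ without first rotating it away), but it has to be executed correctly.
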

\begin{proof}
Let $\delta \in [0,1]$ and $\alpha \in (-\pi/2, \pi/2]$ such that $\cos(\alpha) = 1 - \delta = |\braket{\psi_{0}}{\psi_1}|$.
We have
\begin{align*}
	& I(X:M)
	= S(X) + S(M) - S(XM)
	\\
	= \, & 1 + S\left(\frac{1}{2} \ketbra{\psi_{0}}{\psi_{0}} + \frac{1}{2} \ketbra{\psi_{1}}{\psi_{1}}\right) - S\left(\frac{1}{2} \ketbra{0}{0} \otimes \ketbra{\psi_{0}}{\psi_{0}} + \frac{1}{2} \ketbra{1}{1} \otimes \ketbra{\psi_{1}}{\psi_{1}}\right)
	\\
	= \, & S\left(\frac{1}{2} \ketbra{\psi_{0}}{\psi_{0}} + \frac{1}{2} \ketbra{\psi_{1}}{\psi_{1}}\right).
\end{align*}
Let $N,N'>0$ such that $\ket{\psi_+} = \frac{1}{N} (\ket{\psi_0} + \ket{\psi_1})$ and $\ket{\psi_-} = \frac{1}{N'} (\ket{\psi_0} - \ket{\psi_1})$ are of norm $1$. We have 
\begin{align*}
	\frac{1}{2} \ketbra{\psi_{0}}{\psi_{0}} + \frac{1}{2} \ketbra{\psi_{1}}{\psi_{1}}
	= &\, \cos^2\left(\tfrac \alpha 2\right) \ketbra{\psi_+}{\psi_+} + \sin^2\left(\tfrac \alpha 2\right)\ketbra{\psi_-}{\psi_-}
	\\
	= &\, \left(1 - \frac{\delta}{2}\right) \ketbra{\psi_+}{\psi_+} + \frac{\delta}{2} \ketbra{\psi_-}{\psi_-}.
\end{align*}
Since $\ket{\psi_+}$ and $\ket{\psi_-}$ are orthogonal, this gives~:
\begin{align*}
	I(X:M)
	= & \, S\left(\frac{1}{2} \ketbra{\psi_{0}}{\psi_{0}} + \frac{1}{2} \ketbra{\psi_{1}}{\psi_{1}}\right)
	= S\left(\left(1 - \frac{\delta}{2}\right) \ketbra{\psi_+}{\psi_+} + \frac{\delta}{2} \ketbra{\psi_-}{\psi_-}\right)
	= h_2(\delta/2).
\end{align*}
\end{proof}

\begin{corollary}\label{lem:improvedTraceIC}
	Consider the bipartite quantum state $\rho = \sum_{x \in \zo} \frac{1}{2} \ketbra{x}{x}_X \otimes \ketbra{\psi_{x}}{\psi_{x}}_M$. We have 
		$I(X:M) \ge h_2\left(\frac{1}{4}\Delta^2(\ket{\psi_0},\ket{\psi_1})\right).$
\end{corollary}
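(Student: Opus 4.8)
The plan is to read off the exact value of the mutual information from Lemma~\ref{Lemma:PurePinsker} and then compare it with the claimed quantity using only monotonicity of $h_2$ and an elementary inequality. By Lemma~\ref{Lemma:PurePinsker}, with $c := |\braket{\psi_0}{\psi_1}| \in [0,1]$, we have $I(X:M) = h_2\!\left(\tfrac{1-c}{2}\right)$. On the other hand, by the pure-state formula~\eqref{eq:TrdistPure}, $\Delta(\ket{\psi_0},\ket{\psi_1}) = \tfrac12\sqrt{1-c^2}$, so that
\[
	\tfrac14\,\Delta^2(\ket{\psi_0},\ket{\psi_1}) = \frac{1-c^2}{16}.
\]
Hence the statement to prove reduces to $h_2\!\left(\tfrac{1-c}{2}\right) \ge h_2\!\left(\tfrac{1-c^2}{16}\right)$.

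First I would check that both arguments lie in $[0,\tfrac12]$, the branch on which $h_2$ is nondecreasing: clearly $\tfrac{1-c}{2}\le\tfrac12$ and $\tfrac{1-c^2}{16}\le\tfrac1{16}\le\tfrac12$ for $c\in[0,1]$. It then suffices to compare the arguments, and here the bound is immediate since $1+c\le 2$:
\[
	\frac{1-c^2}{16} = \frac{(1-c)(1+c)}{16} \le \frac{2(1-c)}{16} = \frac{1-c}{8} \le \frac{1-c}{2}.
\]
Applying the monotonicity of $h_2$ on $[0,\tfrac12]$ yields $I(X:M) = h_2\!\left(\tfrac{1-c}{2}\right) \ge h_2\!\left(\tfrac{1-c^2}{16}\right) = h_2\!\left(\tfrac14\Delta^2(\ket{\psi_0},\ket{\psi_1})\right)$, which is the claim.

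The argument is essentially immediate once Lemma~\ref{Lemma:PurePinsker} and~\eqref{eq:TrdistPure} are in hand, so I do not expect a genuine obstacle; the only points needing a moment's care are (i) verifying that both $h_2$-arguments stay in the monotone branch $[0,\tfrac12]$, so that comparing $h_2$-values follows from comparing the arguments, and (ii) the trivial inequality $1-c^2 \le 2(1-c)$ on $[0,1]$. One could instead route through Corollary~\ref{mland:coro:mutualInfoLB-usual} (which gives $I(X:M)\ge 1-F = 1-c$ in this pure case), but using the exact expression from Lemma~\ref{Lemma:PurePinsker} is cleaner and delivers the stated constant directly.
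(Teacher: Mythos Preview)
Your proof is correct and follows essentially the same approach as the paper: invoke Lemma~\ref{Lemma:PurePinsker} for the exact value $I(X:M)=h_2\!\left(\tfrac{1-c}{2}\right)$, use~\eqref{eq:TrdistPure} to rewrite $\tfrac14\Delta^2$ in terms of $c$, and conclude by monotonicity of $h_2$ on $[0,\tfrac12]$. The paper's proof compresses the argument-comparison into the single inequality $(1-c)\ge\tfrac12\Delta^2$, while you spell out the factorization $1-c^2=(1-c)(1+c)\le 2(1-c)$ and explicitly verify that both $h_2$-arguments lie in the monotone branch; the content is the same.
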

\begin{proof}
From Lemma~\ref{Lemma:PurePinsker}, we have $I(X:M) \ge h_2\left(\frac{1 - |\braket{\psi_0}{\psi_1}|}{2}\right)$. The conclusion holds since 
$(1 - |\braket{\psi_0}{\psi_1}|) \ge \frac{1}{2}\Delta^2(\ket{\psi_0},\ket{\psi_1})$ by \eqref{eq:TrdistPure}, and $h_2$ is increasing on $[0,0.5]$.
\end{proof}
As a consequence, when the state is tripartite $\rho = \sum_{x,y \in \zo} \frac{1}{4} \ketbra{x}{x}_X \otimes \ketbra{\psi_{x,y}}{\psi_{x,y}}_M \otimes \ketbra{y}{y}_Y$,
we obtain
	$$
	I(X : M | Y=y) \geq  h_2\left(\frac{1}{4}\Delta^2(\ket{\psi_{0,y}},\ket{\psi_{1,y}})\right),
	$$
instead of the bound given by Corollary~\ref{mland:coro:mutualInfoLB-usual}.

\subsection{Optimal lower bound for the information cost of bounded-round memoryless quantum protocols for $\AND$}


We show the following result.
The proof is very similar to the one presented in~\cite{JRS03} and the improvement we obtain relies on Corollary~\ref{lem:improvedTraceIC}.

\begin{proposition}\label{prop:ICeps}
		$
		\CIC^{\memls}_{\calU_0,\epsilon,k}(\AND) \ge \frac{1}{12 k} (1-2\eps)^2 \log(k) = \Omega(\log(k)/k).
		$
\end{proposition}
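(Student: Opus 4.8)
The plan is to follow the structure of the Jain–Radhakrishnan–Sen argument~\cite{JRS03}, replacing the Pinsker-type bound $I(X:M|Y=y)\ge 1-F$ with the stronger pure-state bound $I(X:M|Y=y)\ge h_2\big(\tfrac14\Delta^2\big)$ from Corollary~\ref{lem:improvedTraceIC}. Fix a $k$-round memoryless protocol $\Pi$ computing $\AND$ with error at most $\eps$ on $\calU_0$. Because the protocol is memoryless, the global state is at every round a pure state depending only on the classical inputs; write $\ket{\psi_i^{x,y}}$ for the state of the message register (together with the inputs held as classical controls) just before round $i$. The key combinatorial observation is the ``rectangle-like'' property: since on inputs $(0,0),(0,1),(1,0)$ Alice and Bob act on disjoint controlled-unitary patterns, the states satisfy $\braket{\psi_i^{0,1}}{\psi_i^{1,0}}$-type relations that let one compare $\Delta(\ket{\psi_i^{0,0}},\ket{\psi_i^{0,1}})$ and $\Delta(\ket{\psi_i^{0,0}},\ket{\psi_i^{1,0}})$ across consecutive rounds — the state seen by the non-acting player does not change in that player's turn.

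Concretely, I would define, for each odd round $i$, the quantity $a_i = \Delta(\ket{\psi_i^{0,0}},\ket{\psi_i^{1,0}})$ (what distinguishes Alice's two inputs, from Bob's point of view, conditioned on $Y=0$), and similarly $b_i$ for the even rounds conditioned on $X=0$. Using Corollary~\ref{lem:improvedTraceIC} applied to the conditional states, each round contributes at least $\tfrac13\,h_2\big(\tfrac14 a_i^2\big)$ (the $\tfrac13$ from the prior weight $\calU_0$ on the relevant input pair) to $\CIC_{\calU_0}(\Pi)$, so
\begin{equation*}
	\CIC_{\calU_0}(\Pi) \ge \frac{1}{3}\sum_{i=1}^{k} h_2\!\left(\tfrac14 \Delta_i^2\right),
\end{equation*}
where $\Delta_i$ is the relevant trace distance at round $i$. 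Next, I would show by a triangle-inequality / unitary-invariance argument that the $\Delta_i$ cannot all be small: since the protocol has error $\le\eps$ and on $\calU_0$ the output is always $0$, while on $(1,1)$ — which the protocol must \emph{not} be required to handle but whose final state is forced by the unitaries applied on the three ``allowed'' inputs — correctness forces a final distinguishing advantage of order $1-2\eps$. By unitarity and memorylessness the total ``distance budget'' telescopes: $\sum_i \Delta_i \gtrsim 1-2\eps$ (each round a player can only change the state seen by the other by the amount they themselves contributed, and these changes must accumulate to separate the $(1,1)$ state from the others).

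Finally, I would optimize: we have $\sum_{i=1}^k \Delta_i \ge c(1-2\eps)$ for an absolute constant $c$, and we want a lower bound on $\sum_{i=1}^k h_2(\tfrac14\Delta_i^2)$. Since $h_2$ is concave and behaves like $\Theta(t\log(1/t))$ near $0$, the sum is minimized (for the worst-case adversary spreading the budget) by taking all $\Delta_i$ equal to $\Theta((1-2\eps)/k)$, giving each term $\Theta\big(\tfrac{(1-2\eps)^2}{k^2}\log(k/(1-2\eps))\big)$ and a total of $\Theta\big(\tfrac{(1-2\eps)^2}{k}\log k\big)$; tracking the constants yields the claimed $\tfrac{1}{12k}(1-2\eps)^2\log k$. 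The main obstacle I expect is making the ``distance budget telescopes to $1-2\eps$'' step fully rigorous: one must carefully set up which pure states are compared at each round, use the memoryless structure to argue the non-acting player's reduced state is unchanged, and then invoke a hybrid argument together with Lemma~\ref{fact:measureH} (measurement can only decrease trace distance) to convert the final-round distinguishability required by the error bound into a sum of per-round increments. The convexity/optimization step and the per-round information bound are comparatively routine given Corollary~\ref{lem:improvedTraceIC}.
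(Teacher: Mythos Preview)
Your outline is essentially the paper's proof: per-round pure-state trace distances, the improved bound of Corollary~\ref{lem:improvedTraceIC} for each round's information contribution, a triangle-inequality/unitary-invariance telescoping to force $\sum_i \Delta_i \gtrsim 1-2\eps$, and a concavity optimization to finish. The paper makes your ``distance budget'' step rigorous exactly as you anticipate, by introducing an auxiliary sequence $\delta_i = \Delta(\ket{\psi^i_{01}},\ket{\psi^i_{11}})$ (odd $i$; symmetrically for even $i$) and proving $\delta_i \le \delta_{i-1} + b_{i-1} + b_i$ via the triangle inequality plus the fact that the non-acting player's unitary is the same on both states being compared.

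One point to straighten out: your parenthetical that the protocol ``must not be required to handle'' input $(1,1)$ is backwards and, if taken literally, would break the argument. The protocol is required to be correct with error at most $\eps$ on \emph{every} input, including $(1,1)$; it is only the \emph{information cost} that is measured against $\calU_0$. It is precisely worst-case correctness on $(1,1)$ versus $(0,1)$ that, via Lemma~\ref{fact:measureH}, yields $\delta_k \ge 1-2\eps$. If correctness were only distributional on $\calU_0$, the constant-$0$ protocol would have zero error and zero $\CIC$, and the proposition would be false. Also, the prior weight you want is $\PP(Y=0)=\tfrac23$ (not $\tfrac13$), matching the paper's reduction $\CIC_{\calU_0}=\tfrac23\,\CIC^0$ in Lemma~\ref{mland:lem:CICQIL0}.
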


In the proof, it will be easier to study the following quantity, defined for every $k$-round memoryless quantum protocol $\overline \Pi$~:
$$
	\CIC^{0}_{\mu}(\overline \Pi) = \sum_{i = 1, \, \mathrm{ odd}}^k I(M_i : X | Y=0) + \sum_{i = 1, \, \mathrm{ even}}^k I(M_i : Y | X=0).
$$
It is tightly related to the classical input information cost under distribution $\calU_0$.
\begin{lemma}\label{mland:lem:CICQIL0}
	$
	\CIC^{\memls}_{\calU_0,\epsilon,k}(\AND) = \inf_{\overline \Pi} \CIC_{\calU_0}(\overline\Pi) = \frac{2}{3} \inf_{\overline \Pi} \CIC^0_{\calU_0}(\overline \Pi),
	$
where  the infimum is over memoryless $k$-round protocols computing $\AND$ with error at most $\eps$ with respect to $\calU_0$.
\end{lemma}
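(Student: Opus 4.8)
The plan is to establish the two equalities separately. The first one, $\CIC^{\memls}_{\calU_0,\eps,k}(\AND) = \inf_{\overline\Pi}\CIC_{\calU_0}(\overline\Pi)$, is immediate: it is simply Definition~\ref{def:QIC1} specialized to $f=\AND$, $\mu=\calU_0$, and error bound $\eps$, with the infimum ranging over all $k$-round memoryless protocols computing $\AND$ within error $\eps$ on $\calU_0$. So all the content is in the second equality, and the strategy is to prove the identity $\CIC_{\calU_0}(\overline\Pi) = \tfrac23\,\CIC^0_{\calU_0}(\overline\Pi)$ for every individual such protocol $\overline\Pi$, and then take the infimum over $\overline\Pi$ on both sides (the admissible family of protocols being the same on both sides).

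First I would invoke the memoryless hypothesis to simplify $\CIC_{\calU_0}(\overline\Pi)$: since the private registers $A_i$ and $B_i$ are trivial, the conditioning on $B_i$ (resp.\ $A_i$) in Definition~\ref{def:CIC} drops out, so
$$
\CIC_{\calU_0}(\overline\Pi) = \sum_{i \ \mathrm{odd}} I(M_i : X \mid Y) + \sum_{i \ \mathrm{even}} I(M_i : Y \mid X).
$$
Then I would expand each conditional mutual information over the classical register being conditioned on, using $I(A:B\mid C)=\sum_c p_c\, I(A^c:B^c)$. Under $\calU_0$, conditioning on $Y=1$ leaves only the input pair $(0,1)$, so $X$ is constant and $I(M_i:X\mid Y=1)=0$; conditioning on $Y=0$ this event has probability $\tfrac23$ and leaves $X$ uniform on $\{0,1\}$. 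Hence $I(M_i:X\mid Y)=\tfrac23\, I(M_i:X\mid Y=0)$, and symmetrically $I(M_i:Y\mid X)=\tfrac23\, I(M_i:Y\mid X=0)$ because $\calU_0$ gives $X=0$ probability $\tfrac23$ and $Y$ is constant given $X=1$. Summing over all rounds gives exactly $\CIC_{\calU_0}(\overline\Pi)=\tfrac23\,\CIC^0_{\calU_0}(\overline\Pi)$, and taking infima over the common family of admissible $\overline\Pi$ finishes the proof.

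I do not expect any genuine obstacle here; the only points needing care are bookkeeping: it is the memoryless property, not any feature of $\calU_0$, that eliminates the $A_i,B_i$-conditioning, and one must correctly identify which branch of the conditioning vanishes ($Y=1$ at odd rounds, $X=1$ at even rounds) and that the surviving branch has weight $\tfrac23$. It is also worth noting in passing that on the surviving branch the conditioned variable is uniform on $\{0,1\}$, which is precisely what lets $\CIC^0_{\calU_0}$ be fed into Lemma~\ref{Lemma:PurePinsker} and Corollary~\ref{lem:improvedTraceIC} when proving Proposition~\ref{prop:ICeps}.
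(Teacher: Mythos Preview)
Your proposal is correct and follows essentially the same argument as the paper: drop the $A_i,B_i$-conditioning by the memoryless assumption, expand $I(M_i:X\mid Y)$ over the values of $Y$ under $\calU_0$, observe that the $Y=1$ (resp.\ $X=1$) branch vanishes because the other input is then deterministic, and collect the surviving $\tfrac23$ weights. The only addition you make is the explicit remark that the first equality is definitional and the side note about uniformity on the surviving branch, both of which are correct and harmless.
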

\begin{proof}
Let us consider such a protocol $\overline \Pi$ and denote $M_i$ its $i$-th message ($1\leq i \leq k$).
We have 
\begin{align*}
\CIC_{\calU_0}(\overline\Pi)
& = \sum_{i\,:\,\text{odd}} I(M_i:X|Y) + \sum_{i\,:\,\text{even}} I(M_i:Y|X).
\end{align*}
Moreover,  under distribution $\calU_0$, for any odd $i \leq k$,
\begin{align*}
	I(M_i : X | Y)
	& = \PP(Y=0) I(M_i : X | Y=0) + \PP(Y=1) I(M_i : X | Y=1) 
	\\
	& = \frac{2}{3} I(M_i : X | Y=0),
\end{align*}
since, when $Y=1$, $X=0$ with probability $1$ so $I(M_i : X | Y=1) = 0$. Similarly, for any even $i$, 
$$
	I(M_i : Y | X) = \frac{2}{3} I(M_i : Y | X=0).
$$
From this, we conclude the lemma.
\end{proof}
We now prove Proposition~\ref{prop:ICeps}.
\begin{proof}[Proof of Proposition~\ref{prop:ICeps}]
By Lemma~\ref{mland:lem:CICQIL0}, it suffices to show that for every $k$-round memoryless protocol $\overline \Pi$,
$$
	\CIC_{\calU_0}^0(\overline \Pi) \geq \frac{(1-2\eps)^2 \log(k)}{8 k}.
$$
Consider such a memoryless protocol $\overline \Pi$. It can be described with the following notations.
Alice and Bob have respective inputs $x$ and $y$. Alice starts with $\ket{0} = \ket{\psi^{0}_{xy}}$. At odd (resp. even) round $i$, Alice (resp. Bob) sends a message.
For each $i \leq k$, let us denote $\ket{\psi_{x,y}^i}$ the state of the message register just after it has been sent over.
We assume that the last message is sent by Alice to Bob (\ie $k$ is odd).

For simplicity, let us define, for $i \leq k$,
\begin{equation*}
	a_i = \begin{cases}
			I(M_i : X | Y=0) &\hbox{ if $i$ is odd}, \\
			I(M_i : Y | X=0) &\hbox{ if $i$ is even}.
		\end{cases} 
\end{equation*}
So $\CIC_{\calU_0}^0(\overline \Pi) = \sum_{i=1}^k a_i$. 
For $i \leq k$, we also define 
\begin{equation*}
	b_i = \begin{cases}
			\Delta(\ket{\psi_{10}^i}, \ket{\psi_{00}^i}) &\hbox{ if $i$ is odd}, \\
			\Delta(\ket{\psi_{01}^i}, \ket{\psi_{00}^i}) &\hbox{ if $i$ is even},
		\end{cases} 
\end{equation*}
and
\begin{equation*}
	\delta_i = \begin{cases}
			\Delta(\ket{\psi_{01}^i},\ket{\psi_{11}^i}) &\hbox{ if $i$ is odd}, \\
			\Delta(\ket{\psi_{10}^i},\ket{\psi_{11}^i}) &\hbox{ if $i$ is even}.
		\end{cases} 
\end{equation*}

We first prove three claims.
\begin{claim}
\label{claim:deltakeps}
It holds that $\delta_k \ge 1 - 2\eps.$
\end{claim}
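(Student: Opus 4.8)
The claim asserts that the final message states corresponding to inputs $(0,1)$ and $(1,1)$ must be far apart in trace distance, because Bob needs to distinguish between $\AND(0,1)=0$ and $\AND(1,1)=1$. The strategy is a standard "good outputs force distinguishability" argument. First I would recall the setup: at the end of the protocol Bob holds the message register $M_k$ in state $\ket{\psi_{x,y}^k}$ (together with his classical input $y$ and possibly the local unitary $U_{k+1}$ which, being memoryless, acts only on $M_k$), and he applies a POVM to produce his output. On input $(0,1)$ the correct answer is $0$ and on input $(1,1)$ the correct answer is $1$, and the protocol errs with probability at most $\eps$ on each of these inputs (this is where the hypothesis that $\overline\Pi$ computes $\AND$ with error at most $\eps$ on $\calU_0$ is used — note $(0,1)$ has mass under $\calU_0$; for $(1,1)$ one uses worst-case error, or observes that the protocol must also be correct there for the reduction to make sense, matching the convention in~\cite{JRS03}).

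The key step is to apply Lemma~\ref{fact:measureH}: if Bob's final measurement (the POVM obtained by composing $U_{k+1}$ with the measurement on $\mathcal O$) outputs $0$ with probability $p_0 \ge 1-\eps$ on state $\ket{\psi_{01}^k}$ and outputs $1$ with probability $q_1 \ge 1-\eps$ on state $\ket{\psi_{11}^k}$ (equivalently outputs $0$ with probability $q_0 \le \eps$ on the latter), then
\[
	\Delta(\ket{\psi_{01}^k}, \ket{\psi_{11}^k}) \ge \tfrac12\big(|p_0-q_0| + |p_1-q_1|\big) \ge |p_0 - q_0| \ge (1-\eps) - \eps = 1-2\eps.
\]
Since $\delta_k = \Delta(\ket{\psi_{01}^k},\ket{\psi_{11}^k})$ (using that $k$ is odd, per the definition of $\delta_i$), this gives $\delta_k \ge 1-2\eps$ as claimed. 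I would also note that since trace distance is invariant under the unitary $U_{k+1}$ applied by Bob, it does not matter whether we measure distinguishability before or after that unitary.

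The only subtlety — and the one place to be slightly careful — is bookkeeping about what "error at most $\eps$" means for the input $(1,1)$, which has zero mass under $\calU_0$. In the memoryless reduction context the relevant protocols are taken to compute $\AND$ correctly (with error $\eps$) on all inputs including $(1,1)$, so the estimate $q_1 \ge 1-\eps$ is legitimate; alternatively one restricts attention to worst-case-error protocols. This is a minor point and the core of the argument is the one-line application of Lemma~\ref{fact:measureH}. I expect no real obstacle here — the claim is essentially the base case that seeds the later recursion on the $b_i,\delta_i,a_i$.
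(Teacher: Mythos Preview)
Your proposal is correct and follows essentially the same approach as the paper: both argue that Bob's final measurement must distinguish the $\AND=0$ case from the $\AND=1$ case with total error at most $2\eps$, and then invoke Lemma~\ref{fact:measureH} to lower-bound $\delta_k = \Delta(\ket{\psi_{01}^k},\ket{\psi_{11}^k})$ by $1-2\eps$. The subtlety you flag about the input $(1,1)$ having zero mass under $\calU_0$ is real and the paper handles it exactly as you suggest, by implicitly taking the protocol to have worst-case error at most $\eps$ on every input (``For any inputs $(x,y)$, Bob outputs $x\wedge y$ with probability at least $1-\eps$''), in line with the convention in~\cite{JRS03}.
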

\begin{proof}
For any inputs $(x,y)$, Bob outputs $x \wedge y$ with probability at least $1-\eps$ at the end of the protocol, where he has the state $\ket{\psi_{xy}^k}$. In particular, on inputs $(1,0)$, Bob will output $0$ with probability $1-\eps$ and on inputs $(1,1)$, Bob will output $1$ with probability $1-\eps$. The conclusion holds by Lemma~\ref{fact:measureH} and by the fact that $\delta_k = \Delta(\ketn{\psi_{01}^k},\ketn{\psi_{11}^k})$.
\end{proof}

\begin{claim}
\label{claim:deltakbi}
It holds that $\delta_k \le 2\sum_{i=1}^k b_i. $
\end{claim}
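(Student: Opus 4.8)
The plan is to prove the one-step recursion $\delta_j \le \delta_{j-1} + b_{j-1} + b_j$ for every $2 \le j \le k$, together with the base case $\delta_1 = b_1$, and then telescope. The only structural input is the memoryless property: since no player keeps anything besides the classical input, the entire state of the protocol after round $i$ is the pure message register $\ket{\psi_{xy}^i}$ together with the inputs, and at an odd round $j$ we have $\ket{\psi_{xy}^j} = U_j^x \ket{\psi_{xy}^{j-1}}$ where the round-$j$ unitary $U_j^x$ depends only on Alice's input $x$, and symmetrically $\ket{\psi_{xy}^j} = U_j^y\ket{\psi_{xy}^{j-1}}$ at even rounds with $U_j^y$ depending only on $y$. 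Consequently, applying the round-$j$ unitary to two states that differ only in the non-sending player's input leaves their trace distance unchanged, by the unitary invariance of $\Delta$. (If the message register grows between rounds, read the $U_j^x$ as isometries adjoining fresh ancillas in $\ket 0$; $\Delta$ is invariant under isometries, so nothing changes.)

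For the recursion, introduce the auxiliary quantity $c_j := \Delta(\ket{\psi_{00}^j}, \ket{\psi_{11}^j})$ and show both $\delta_j \le b_{j-1} + c_j$ and $c_j \le b_j + \delta_{j-1}$; adding them gives the recursion. Each of these follows from a single triangle inequality through a well-chosen midpoint. For instance, when $j$ is odd (the case relevant for $\delta_k$, since $k$ is odd), $\delta_j = \Delta\bigl(U_j^0\ket{\psi_{01}^{j-1}},\, U_j^1\ket{\psi_{11}^{j-1}}\bigr)$, and inserting the midpoint $U_j^0\ket{\psi_{00}^{j-1}}$ gives $\delta_j \le \Delta(\ket{\psi_{01}^{j-1}},\ket{\psi_{00}^{j-1}}) + \Delta(\ket{\psi_{00}^j},\ket{\psi_{11}^j}) = b_{j-1} + c_j$, where the first term was simplified by cancelling the common unitary $U_j^0$. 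The bound $c_j \le b_j + \delta_{j-1}$ is the mirror image: split $c_j$ through a hybrid state that agrees with one endpoint on the sending player's input, so that one summand simplifies to $b_j$ and the other to $\delta_{j-1}$. Carrying this out for each parity of $j$ (four small cases in all) yields $\delta_j \le \delta_{j-1} + b_{j-1} + b_j$.

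Finally, the base case: at round $1$ Alice has received nothing from Bob, so $\ket{\psi_{0y}^1} = U_1^0\ket 0$ and $\ket{\psi_{1y}^1} = U_1^1\ket 0$ for both values of $y$, whence $\delta_1 = \Delta(\ket{\psi_{01}^1},\ket{\psi_{11}^1}) = \Delta(\ket{\psi_{00}^1},\ket{\psi_{10}^1}) = b_1$. Telescoping, $\delta_k \le \delta_1 + \sum_{j=2}^k (b_{j-1}+b_j) = b_1 + \sum_{i=1}^{k-1} b_i + \sum_{i=2}^k b_i = 2\sum_{i=1}^k b_i - b_k \le 2\sum_{i=1}^k b_i$, which is the claim. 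The only delicate point is the bookkeeping of which of the four input pairs each state corresponds to at each parity, so that every triangle-inequality split genuinely peels off a term in which the current sender's input is shared between the two states — precisely the configuration where the round-$j$ unitary cancels and the term becomes one of the $b_i$'s (or $\delta_{j-1}$). Choosing the midpoint correctly in each case is the entire content of the argument; there is no deeper obstacle.
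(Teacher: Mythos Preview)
Your proof is correct and essentially the same as the paper's. Both arguments establish the one-step recursion $\delta_j \le b_{j-1} + b_j + \delta_{j-1}$ by a triangle-inequality chain $\ket{\psi_{01}^j} \to \ket{\psi_{00}^j} \to \ket{\psi_{10}^j} \to \ket{\psi_{11}^j}$ (for odd $j$) together with unitary invariance of $\Delta$; the paper writes this chain in a single line, whereas you split it into two pieces via the auxiliary $c_j = \Delta(\ket{\psi_{00}^j},\ket{\psi_{11}^j})$, but the underlying path and the telescoping conclusion are identical.
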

\begin{proof}
Let us first notice that for every $i \leq k$, $\delta_{i} \leq b_{i-1} + b_{i} + \delta_{i-1}$. Indeed, if $i$ is odd,
\begin{align*}
	\delta_{i} &= \Delta(\ket{\psi_{01}^i},\ket{\psi_{11}^i}) \\
			&\leq\Delta(\ket{\psi_{01}^i},\ket{\psi_{00}^i})
			 + \Delta(\ket{\psi_{00}^i},\ket{\psi_{10}^i})
			 + \Delta(\ket{\psi_{10}^i},\ket{\psi_{11}^i}) \tag{by triangle inequality}\\
			&\leq \Delta(\ket{\psi_{01}^{i-1}},\ket{\psi_{00}^{i-1}})
			+ b_{i}
			+ \Delta(\ket{\psi_{10}^{i-1}},\ket{\psi_{11}^{i-1}}) \tag{by unitary invariance of $\Delta$}\\
			&= b_{i-1} 
			+ b_{i}
			 + \delta_{i-1},
\end{align*}
where the second inequality holds since, at round $i$, Alice is the sender, and the unitary she applies depend only on her input (which is $x=0$ for both $\ket{\psi_{01}^i}$ and $\ket{\psi_{00}^i}$, and $x=1$ for both $\ket{\psi_{10}^i}$ and $\ket{\psi_{11}^i}$).
We repeat a similar argument if $i$ is even. Then, Claim~\ref{claim:deltakbi} is obtained by repeating this in a recursive way down to $i=1$. 
\end{proof}

\begin{claim}
\label{claim:biQIL}
	If for all $i \leq k$ we have $a_i \le 0.4$, then $\sum_{i=1}^{k} b_i \le 2k \sqrt{h_2^{-1}\left(\frac{\CIC_{\calU_0}^0(\overline \Pi)}{k}\right)}.$
\end{claim}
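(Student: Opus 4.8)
The plan is to relate each $b_i$ to $a_i$ via the pure-state information bound of Corollary~\ref{lem:improvedTraceIC}, and then combine the bounds over all rounds using concavity of $h_2^{-1}$. First I would observe that for each round $i$, the message $M_i$ is a pure state conditioned on the inputs (the protocol is memoryless), and $b_i$ is precisely the trace distance between the two relevant conditional pure states: for odd $i$, between $\ket{\psi^i_{10}}$ and $\ket{\psi^i_{00}}$, which are the message states when $Y=0$ and $X=1$ vs $X=0$. Applying the tripartite version of Corollary~\ref{lem:improvedTraceIC} stated just after it (with the roles of $X$ and $Y$ swapped on even rounds), we get
\begin{equation*}
	a_i \;=\; I(M_i : X \mid Y=0) \;\ge\; h_2\!\left(\tfrac14 b_i^2\right)
\end{equation*}
for odd $i$, and the analogous statement for even $i$. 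Since $h_2$ is increasing on $[0,0.5]$ with inverse $h_2^{-1}$, and since the hypothesis $a_i \le 0.4 < 1$ guarantees $\tfrac14 b_i^2$ lies in the domain where $h_2^{-1}$ is the genuine inverse (i.e. $\tfrac14 b_i^2 \le 0.5$, which indeed holds automatically as $b_i \le 1$), this rearranges to $b_i \le 2\sqrt{h_2^{-1}(a_i)}$.

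Next I would sum over $i$. We have $\sum_{i=1}^k b_i \le 2\sum_{i=1}^k \sqrt{h_2^{-1}(a_i)}$. The function $a \mapsto \sqrt{h_2^{-1}(a)}$ is concave on $[0,1]$ (this is where the square root is important: $h_2^{-1}$ itself is convex near $0$, but composing with $\sqrt{\cdot}$ restores concavity — alternatively one checks $h_2^{-1}$ is concave enough that the composition is concave, which is a standard fact used in this line of work). Hence by Jensen's inequality,
\begin{equation*}
	\sum_{i=1}^k \sqrt{h_2^{-1}(a_i)} \;\le\; k \sqrt{h_2^{-1}\!\left(\tfrac1k \sum_{i=1}^k a_i\right)} \;=\; k\sqrt{h_2^{-1}\!\left(\tfrac{\CIC^0_{\calU_0}(\overline\Pi)}{k}\right)},
\end{equation*}
using that $\sum_i a_i = \CIC^0_{\calU_0}(\overline\Pi)$. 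Combining the two displays gives $\sum_{i=1}^k b_i \le 2k\sqrt{h_2^{-1}(\CIC^0_{\calU_0}(\overline\Pi)/k)}$, as claimed.

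The main obstacle I anticipate is the concavity of $a \mapsto \sqrt{h_2^{-1}(a)}$ needed for the Jensen step, which is a real analytic fact rather than a formality; I would either cite it from the prior literature (it appears in~\cite{JRS03}) or verify it by differentiating twice, using $h_2'(h_2^{-1}(a)) = \log\!\big((1-p)/p\big)$ at $p = h_2^{-1}(a)$. A secondary point to be careful about is the domain of $h_2^{-1}$: one must check that each $\tfrac14 b_i^2$ and the average $\tfrac1k\sum a_i$ stay in $[0,0.5]$ and $[0,1]$ respectively so that $h_2^{-1}$ is well-defined and monotone; the hypothesis $a_i \le 0.4$ and the trivial bound $b_i \le 1$ handle this, though one should note that if $\CIC^0_{\calU_0}(\overline\Pi)/k > 1$ the claim is vacuous since then $h_2^{-1}$ is taken at a value $\ge 1$ where the right side exceeds $2k$, which already exceeds $\sum b_i \le k$.
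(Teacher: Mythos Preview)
Your approach is essentially the same as the paper's: invert Corollary~\ref{lem:improvedTraceIC} to get $b_i \le 2\sqrt{h_2^{-1}(a_i)}$, then apply Jensen via concavity of $a \mapsto \sqrt{h_2^{-1}(a)}$. The overall skeleton is correct.

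There is, however, one genuine error in your write-up. The function $g(a) = \sqrt{h_2^{-1}(a)}$ is \emph{not} concave on all of $[0,1]$. Near $a=1$, writing $p = h_2^{-1}(a)$ and $q = \tfrac12 - p$, a Taylor expansion gives $1-a \approx \tfrac{2}{\ln 2}\,q^2$, hence $g(a) \approx \sqrt{1/2}\bigl(1 - c\sqrt{1-a}\bigr)$ for a constant $c>0$, which is convex. (Equivalently, $g'(a) = \bigl(2\sqrt{p}\,\log\tfrac{1-p}{p}\bigr)^{-1}$ blows up both as $a\to 0$ and as $a\to 1$, so $g'$ cannot be monotone on $[0,1]$.) The paper handles this by checking analytically that $g$ is concave on $[0,0.4]$, and \emph{this} is precisely the role of the hypothesis $a_i \le 0.4$: it keeps every $a_i$ (and hence also their average $\CIC^0_{\calU_0}(\overline\Pi)/k$) inside the concavity region so that Jensen applies. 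The purpose you assign to the hypothesis --- ensuring $\tfrac14 b_i^2 \le \tfrac12$ so that $h_2^{-1}$ is well-defined --- is, as you yourself note, automatic from $b_i \le 1$ and does not actually use $a_i \le 0.4$. So your proof goes through once you restrict the concavity claim to $[0,0.4]$ and invoke the hypothesis there rather than for the domain check.
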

\begin{proof}
Let us assume that for all $i \leq k$ we have $a_i \le 0.4$.
	By Corollary~\ref{lem:improvedTraceIC}, we have for all $i \leq k$ that $b_i \le 2 \sqrt{h_2^{-1}(a_i)}$, hence
	$$
		\sum_{i=1}^k b_i \le \sum_{i=1}^k 2 \sqrt{h_2^{-1}(a_i)}.
	$$
	We can check analytically that the function $f : x \mapsto 2\sqrt{h_2^{-1}(x)}$ is concave on the interval $[0,0.4]$.
	Since, by assumption, $a_i$ lies in this interval for all $i \leq k$, we have  by concavity of $f$:
	\begin{align*}
		\sum_{i=1}^{k} b_i & \le \sum_{i=1}^{k} 2 \sqrt{h_2^{-1}(a_i)} = k \sum_{i=1}^{k} \frac{1}{k} 2\sqrt{h_2^{-1}(a_i)}  \le 2k \sqrt{h_2^{-1}\left(\frac{\CIC_{\calU_0}^0(\overline \Pi)}{k}\right)}.
	 \end{align*}
\end{proof}
We now conclude the proof of Proposition~\ref{prop:ICeps}.
If there exists $j \leq k$ such that $a_j > 0.4$, then we immediately have 
$$
\CIC_{\calU_0}^0(\overline \Pi) = \sum_{i=1}^k a_i \ge a_j > 0.4 \ge \frac{(1-2\eps)^2 \log(k)}{8k}.
$$
We now consider the case where $a_i \le 0.4$ for all $i \leq k$. We combine Claims~\ref{claim:deltakeps},~\ref{claim:deltakbi} and~\ref{claim:biQIL} to obtain
$$
\frac 12 (1 - 2\eps) \le 2k \sqrt{h_2^{-1}\left(\frac{\CIC_{\calU_0}^0(\overline \Pi)}{k}\right)},
$$
which implies 
	\begin{equation*}
		 \frac{\CIC_{\calU_0}^0(\overline \Pi)}{k} \ge h_2\left(\frac{(1-2\eps)^2}{16k^2}\right).
	\end{equation*}
	From there, we obtain, using the fact that $h_2(x) \geq x \log(1/x)$ for any $x\in[0,1]$~: 
	\begin{equation*}
		\frac{\CIC_{\calU_0}^0(\overline \Pi)}{k} \ge h_2\left(\frac{(1- 2\eps)^2}{16k^2}\right) \ge \frac{(1-2\eps)^2}{16k^2} \log\left(\frac{16k^2}{(1-2\eps)^2}\right) \ge \frac{(1-2\eps)^2 \log(k)}{8k^2}.
	\end{equation*}
	This yields $\CIC_{\calU_0}^0(\overline \Pi) \ge \frac{(1-2\eps)^2 \log(k)}{8k}$.	
	We conclude the proof of Proposition~\ref{prop:ICeps} using Lemma~\ref{mland:lem:CICQIL0}.
\end{proof}

\subsection{Tightness of the bound}
\label{qvariants-lbtight}
In~\cite{MR3473340-BravermanGKMT-2015}, the authors provide a protocol, denoted here $\Pi_{\AND}$  and described at Figure~\ref{fig:algorithmAND}, attributed to Jain, Radhakrishnan and Sen. It was proved to compute $\AND$ correctly under the distribution $\calU_0$ and to have information cost $O\big(\log(k)/k\big)$.

\begin{figure}[ht!]
\begin{center}
\fbox{
\begin{minipage}{13 cm}
\underline{\textbf{Protocol $\Pi_{\AND}$.}} Inputs : $(x,y) \in \zo\times\zo$. Parameter $r \in \NN$.
\begin{enumerate}
	\item Set $\theta = \frac{\pi}{8r}$. Let $\ket v$ be the vector $\cos(\theta) \ket 0 + \sin(\theta) \ket 1$. Let $U_v$ be the unitary operation of reflecting about the vector $\ket v$ i.e. $U_v \ket 0 = \cos(2\theta) \ket 0 + \sin(2\theta) \ket 1$ and $U_v \ket 1 = \sin(2\theta) \ket 0 - \cos(2\theta) \ket 1$. Also let $Z$ be the unitary operation of reflecting about $\ket 0$, i.e. $Z \ket 0 = \ket 0$ and $Z \ket 1 = - \ket 1$.
	\item Alice starts by preparing a qubit $C$ in state $\ket 0$.
	\item If $x = 0$, Alice applies the identity operation on $C$ and sends it to Bob. If $x = 1$, Alice applies the $U_v$ operation on $C$ and sends it to Bob.
	\item If $y = 0$, Bob applies the identity operation on $C$ and sends it to Alice. If $y = 1$, Bob applies the $Z$ operation on $C$ and sends it to Alice.
	\item After $k = 4r - 1$ rounds, Bob measures the register $C$. If the result is $1$, then he answers $1$, otherwise he answers $0$. He also sends this to Alice.
\end{enumerate}
\end{minipage}
}
\end{center}
\caption{Quantum protocol for $\AND$.}\label{fig:algorithmAND}
\end{figure}
Since $\Pi_{\AND}$ does not use any private memory register, their result and the lower bound provided by Proposition~\ref{prop:ICeps} yield Theorem~\ref{Theorem:CIC_pure}.

\section{Classical coins allow for private quantum protocols}
In this section, we study the behavior of $\CIC$ when the players are allowed to have private classical coins. 
We show that we can transform any quantum protocol into a memoryless quantum protocol with private coins whose $\CIC$ is exactly the information Bob learns just from the output of the protocol and nothing more.

Let us denote by ${\mathcal T}_{\mu,\epsilon,k}^{\memls,\coins}(f)$ the set of all $k$-round memoryless protocols $ \Pi$ with private coins computing function $f$ with error at most $\epsilon$ with respect to $\mu$.

We define the following notions for $\CIC$ with private coins.

\begin{definition}\label{def:QIC2}
	The $k$-round $\epsilon$-error private-coin memoryless classical input information complexity of $f$ relative to input distribution $\mu$ is defined as
	$
	\CIC_{\mu,\epsilon,k}^{\memls,\coins}(f) = \inf_{\Pi \in {\mathcal T}_{\mu,\epsilon,k}^{\memls,\coins}(f)} \CIC_{\mu}(\Pi).
	$
	We also define the quantity $\CIC_{\mu,\epsilon,k}^{\coins}(f)$ where we remove the memoryless constraint.
\end{definition}

Here, we will prove the following (restated) result.

\zeroCICPif*

We say that such a protocol $\Pi'$ is private since it leaks the minimal $\CIC$ possible: any protocol leaks at least as much information as Bob's output leaks to himself.

\begin{proof}

Let us consider a $k$-round quantum protocol $\Pi$ where Alice and Bob respectively have classical input $x$ and $y$. At the end of the protocol, Bob has an output in quantum register $\mathcal{O}_B = \mathcal{O}_k$, which he measures in the computational basis to get $f(x,y)$. Alice and Bob also share at the end of the protocol some garbage state in registers $\mathcal{G}_A, \mathcal{G}_B$. We have $\mathcal{G}_A = A_k$ and $\mathcal{G}_B = (B_k , \mathcal{O}_k) = (B_k ,  \mathcal{O}_B)$.

If at the end of $\Pi$ the output register $\mathcal{O}_B$ contains a superposition and not a mixture, we consider that Bob has an additional register $\mathcal{O}'_B$ and applies a CNOT operation on $\mathcal{O}_B,\mathcal{O}'_B$, so that $\mathcal{O}_B$ is a mixture. 
Bob now uses $\mathcal{O}_B$ as his output register. This protocol succeeds with the same probability as $\Pi$, so for notational simplicity we also denote it $\Pi$.

We now transform protocol $\Pi$ into a protocol $\overline \Pi$ which uses private classical coins and is a private protocol. In $\overline \Pi$, Alice and Bob will have some extra random coins. Let $m = \sum_{i = 1}^{k} |M_i|, a = \sum_{i = 1}^{k} |A_i|, b = \sum_{i = 1}^{k} |B_i|$ the total number of qubits sent in $\Pi$, and the cumulated memory sizes used by Alice and Bob respectively. Let $t = 2(m+a+b)$. In $\overline \Pi$, we give Alice and Bob respectively $t_A = (t_A^1,\dots,t_A^k)$ and $t_B = (t_B^1,\dots,t_B^k)$ random coins, where $t_A^1,t_B^1$ are of length $2(|A_0| + |M_0|)$ and $t_A^i,t_B^i$ are of length $2(|M_i|+|A_i|+|B_i|)$, $2 \leq i \leq r$. We also give Bob extra randomness $s_B$ of length $2 (|A_k| + |B_k| + |M_k|)$ and Alice extra randomness $s_A$ of length $2 (|A_k| + |B_k| + |M_k| - 1)$. The coins $t_A$ and $t_B$ will be used to perform an encrypted version of $\Pi$ while the coins $s_A,s_B$ will be used to retrieve the output with minimum information leakage.

Each player will encrypt the message he sends using the quantum one time pad. The quantum one time pad (QOTP) acting on $q$ qubits is the unitary $\mathcal{E}$ taking $2q$ classical bits $s = (s_1,\dots,s_{2q})$ as parameters such that 
$$ \mathcal{E}_{s}(\ket{\phi}) = \left(\bigotimes_{i = 1}^{q} X^{s_{2i - 1}} Z^{s_{2i}} \right) \ket{\phi}$$
where $X$ and $Z$ are respectively the bit flip operator and the phase flip operator. Note that the encrypted state when we average uniformly over $s$ is the totally mixed state.

We can now present our protocol $\overline \Pi$:
\begin{enumerate}
	\item \textbf{Alice and Bob perform an encrypted version of $\Pi$ using coins $t_A$ and $t_B$}: 
	\begin{itemize}
		\item At round $1$, Alice applies the same unitary $U_1$ as in $\Pi$ on registers $A_{0}  M_{0}$ to obtain $A_{1}  M_{1}$. Then, she encrypts register $A_{1}  M_{1}$ by applying the \qotp $\mathcal{E}_{t_A^1}$ on these registers. Finally, she sends $A_{1}  M_1$ to Bob.
		\item For $i \in \{2, \dots, k\}$, 
		\begin{itemize}
        		\item if $i$ is even, Bob applies $(\mathcal{E}_{t_B^{i-1}})^{-1}$ on register $A_{i-1} M_{i-1} B_{i-1}$ ($(\mathcal{E}_{t_B^{1}})^{-1}$ will act only on $A_1 M_1$) as an attempt to undo Alice's encryption. Bob then applies the same unitary $U_i$ as in $\Pi$ on registers $ M_{i-1} B_{i-1} $ to obtain $  M_{i} B_{i}$. We have here $A_{i} = A_{i-1}$. Then, he encrypts registers $A_{i} M_{i} B_{i}$ by applying the \qotp $\mathcal{E}_{t_B^i}$ on these registers. Finally, he sends $A_i M_{i} B_i$ to Alice.
        		\item if $i$ is odd, Alice applies $(\mathcal{E}_{t_A^{i-1}})^{-1}$ on register $A_{i-1} M_{i-1} B_{i-1}$ as an attempt to undo Bob's encryption.
        		Alice then applies the same unitary $U_i$ as in $\Pi$ on registers $A_{i-1} M_{i-1}$ to obtain  $A_{i} M_i B_{i}$. Here, we have $B_{i} = B_{i-1}$. Then, she encrypts registers $A_{i} M_i B_{i}$ by applying the \qotp $\mathcal{E}_{t_A^i}$ on these registers. Finally, she sends $A_{i} M_i B_i$ to Bob.
		\end{itemize}
		\item Finally, Bob applies $(\mathcal{E}_{t_B^{k}})^{-1}$ on registers $A_{k} M_{k} B_{k}$ as an attempt to undo Alice's encryption. Bob then applies $U_{k+1}$ on register $B_k = \mathcal{G}_B \mathcal{O}_B$ to get his output. 
	\end{itemize}
At this point, notice here that, if for all $i=1,\dots,k$, we have $t_A^i = t_B^i$, then Alice and Bob essentially performed $\Pi$.
\item \textbf{Bob sends his registers after encryption}:  Bob applies $\mathcal{E}_{s_B}$ on all the registers he holds, namely $A_k M_k B_k$, and sends them all to Alice. 
\item \textbf{Checking the coins $\mathbf{t_A = t_B}$}: Bob sends $t_B$ to Alice. Alice sends one bit to Bob: $1$ if $t_A = t_B$, $0$ otherwise. We distinguish now two cases:
\begin{itemize}
	\item if $t_A = t_B$, Alice applies $\mathcal{E}_{s_A}$ on registers $A_kM_k\mathcal{G}_B$, and sends back all her registers $A_kM_kB_k = A_kM_k\mathcal{G}_B\mathcal{O}_B$ to Bob. This means that all registers are encrypted by $s_A$ except $\mathcal{O}_B$ which is encrypted only by Bob. Bob undoes the encryption for this qubit using the corresponding bits in $s_B$ and outputs $\mathcal{O}_B$. 
	\item if $t_A \neq t_B$, Alice's state is totally mixed from Bob's randomness $s_B$ and Bob has no quantum registers. Then, both players discard their quantum registers and start again from step $1$, using fresh randomness for $t_A,t_B,s_B$. 
\end{itemize}
\end{enumerate}

\begin{remark}
	Note that we can also start from a quantum protocol $\Pi$, first transform it into a memoryless quantum protocol $\tilde \Pi$ as for instance in~\cite{BCG+16} and then define a private protocol $\tilde \Pi$ where the players encrypt and decrypt $\tilde \Pi$'s registers as described above. However, this would be at the expense of a quadratic blow-up in the communication cost while going from $\Pi$ to $\tilde \Pi$.
\end{remark}

\noindent \textbf{Analysis of the protocol}

\begin{enumerate}
\item \textbf{Encrypted $\Pi$}:
For an odd round $i$, Alice sends registers $A_i M_i B_i$ to Bob. We denote by $T_A,T_B,S_B$ the registers corresponding to $t_A,t_B,s_B$ respectively. The term appearing in CIC for this round is: 
\begin{align}\label{Eqn:1}
	I(A_i M_i B_i : X | T_B S_B Y)_{\rho^{A_i M_i B_i X Y T_B S_B}}
	\leq
	I(A_i M_i B_i : X Y T_B S_B)_{\rho^{A_i M_i B_i X Y T_B S_B}}
\end{align}
where $\rho^{A_i M_i B_i X Y T_B S_B}$ is of the form $\mathbb{I}_{A_i M_i B_i} \otimes \rho^{X Y T_B S_B}$ (for some classical state $\rho^{X Y T_B S_B}$),
because of the quantum one time pad realized by Alice using her coins, $T_A$. Hence $I(A_i M_i B_i:X|T_B S_B Y) = 0$. We can prove a similar statement for even $i$.
\item \textbf{Encrypting and sending Bob's registers}:
As above, this consists only of sending encryptions back and forth of quantum registers using fresh randomness. Similarly as in Equation \ref{Eqn:1}, this does not give any information.
\item \textbf{Checking the coins}: 
When Bob sends $t_B$ to Alice, her quantum state is totally encrypted by coins $s_B$ (which are kept by Bob) hence her state is independent of $t_B$. Therefore, she does not receive any information about $Y$ when she learns $t_B$. 
Then Alice checks whether $t_A = t_B$ or not, and she sends the answer to Bob. 
\begin{itemize}
	\item If $t_A \neq t_B$, then Bob knows only this information, which is independent of $x$ so the term in CIC is zero. Then the players restart the process.
	\item If $t_A = t_B$, Alice's quantum registers are fully encrypted by Bob which, as usual, does not contribute to $\CIC$. After Alice's last message to Bob, let $\rho_{final}$ the shared state. The contribution to $\CIC$ is 
	
	\begin{align*}
	 I(A_kM_k\mathcal{G}_B\mathcal{O}_B : X |T_BS_BY)_{\rho_{final}} & = 
	 I(\mathcal{O}_B : X |T_BS_BY)_{\rho_{final}} + I(A_kM_k\mathcal{G}_B : X |T_BS_BY\mathcal{O}_B)_{\rho_{final}} \\
	 & \le I(\mathcal{O}_B : X |T_BS_BY)_{\rho_{final}} + I(A_kM_k\mathcal{G}_B : X T_BS_BY\mathcal{O}_B)_{\rho_{final}} \\
	 & = I(\mathcal{O}_B : X |T_BS_BY)_{\rho_{final}}
	 \end{align*}
	where we used 
	$\rho_{final} = \rho_{final}^{A_kM_k\mathcal{G}_B} \otimes \rho_{final}^{X T_BS_BY\mathcal{O}_B} = 
	\mathbb{I}_{A_kM_k\mathcal{G}_B}  \otimes \rho_{final}^{X T_BS_BY\mathcal{O}_B} $ 
	because of Alice's final encryption.
	We are in the case $t_A = t_B$ so the output register is exactly the one of the original protocol, and therefore contributes $I\left(\Pi_{out}(X,Y) :X|Y\right)$ to the total $\CIC$.
\end{itemize}
The above analysis can be carried out at each repetition of these steps until the players have a run where they agree on their coins. Hence $\overline \Pi$ has minimal $\CIC$.
\end{enumerate}
\end{proof}
 
Our results imply that given any function $f$, if we take for $\Pi$ the protocol where Alice just sends over $x$ to Bob who computes and output $f(x,y)$, we obtain a protocol $\Pi'$ that can perfectly compute $f$ with $\CIC$ only the information gained from $f(x,y)$. Hence, we have the following Corollary.

\zeroCICPifcorofct*

Note that, in the case of AND, on distribution $\calU_0$ the output of $\AND(x,y) = x \wedge y$ is always $0$. Hence, by the above result, $\CIC^{\memls,\coins}_{\calU_0,k}(\AND) = \CIC^{\coins}_{\calU_0,k}(\AND) = 0$ for every integer $k$.


\section{One-shot coins}

\subsection{Definitions}\label{Section:Memoryless_Definitions}

In the previous section we saw that the model where quantum players have private coins allows for private protocols. While this implies that quantum communication can offer a new advantage compared to classical communication, it also shows that information complexity cannot provide non-trivial lower bounds for the quantum communication in this model. For this reason, we define a new model, where we restrict how the players can use these private coins. More precisely, we allow the players to use one-shot private coins, meaning that at each round they can only use fresh private coins which are independent of the previous and the future rounds. 

\begin{definition}
Consider a quantum protocol with private coins. A coin is said to be one-shot if it is read only once or, in other words, if it used only once by a unitary operation during any run of the protocol. The protocol is said to be one-shot coin if all its coins are one-shot.  
\end{definition}

One-shot coins are a very natural way of using coins. In the classical setting, every randomized protocol (without restriction on coins) can be simulated by a one-shot coin protocol with the same information and communication costs, since it is always possible to sample directly the messages~\cite{LauriereT16}.

More precisely, a memoryless quantum protocol $\Pi$ with one-shot coins can be described as follows. At each round, a fresh set of coins is used. For odd (resp. even) $i \in [k]$, we denote by $R^A_i$ (resp $R^B_i$) the random variables corresponding to the coins used by Alice (resp. Bob) at round $i$. We note $R^A = (R^A_i)_{i \in [k], i \,\mathrm{ odd}}$ and $R^B = (R^B_i)_{i \in [k], i \,\mathrm{ even}}$. For simplicity we use the same notation for the (classical) registers corresponding to these random variables. The interaction will be the following:
\begin{itemize}
	\item The players start with classical respective registers $X,R^A$ and $Y,R^B$. Alice starts with a quantum register $M_0$.
	\item At any odd round $i$, Alice applies an isometry from $M_{i-1}$ to $M_i$ using $R^A_i$ as a classical control, and sends the whole quantum register $M_i$ to Bob.
	\item At any even round $i$, Bob applies an isometry from $M_{i-1}$ to $M_i$ using $R^B_i$ as a classical control, and sends the whole quantum register $M_i$ to Alice.
\end{itemize}

Let us denote by ${\mathcal T}_{\mu,\epsilon,k}^{\memls,\oscoins}(f)$ the set of all $k$-round memoryless protocols $ \Pi$ with private one-shot coins computing function $f$ with error at most $\epsilon$ with respect to $\mu$.

\begin{definition}\label{def:QIC_C1}
	The $k$-round $\epsilon$-error one-shot-coin memoryless classical input information complexity of $f$ relative to distribution $\mu$ is defined as
	$
	\CIC_{\mu,\epsilon,k}^{\memls,\oscoins}(f) = \inf_{\Pi \in {\mathcal T}_{\mu,\epsilon,k}^{\memls,\oscoins}(f)} \CIC_{\mu}(\Pi).
	$
	We also define the quantity $\CIC_{\mu,\epsilon,k}^{\oscoins}(f)$ where we remove the memoryless constraint.
\end{definition}


\subsection{Relating $CIC^{\memls}_{\calU_0,k}$  and $\CIC^{\memls, \oscoins}_{\calU_0,k}$ for the $\AND$ function}


In this section, we show how to remove the one-shot coins from a memoryless quantum protocol with input distribution $\calU_0$ without increasing too much the information cost of the protocol.

\oneshotcoinsBound*

Before proving this result (see Section~\ref{mland:sec:oneshot-proof} below), let us stress that
for protocols computing (exactly) AND under distribution $\calU_0$, this implies immediately 
$$
CIC^{\memls}_{\calU_0,\eps,k}(AND) 
	=
	O\left(\CIC^{\memls, \oscoins}_{\calU_0,\eps,k}(AND) \cdot \left(\log(k) + \left| \log \CIC^{\memls, \oscoins}_{\calU_0,\eps,k}(AND) \right| \right) \right).
$$

Combining the above with Theorem~\ref{Theorem:CIC_pure}, we obtain 
\begin{corollary}
	For every positive integer $k$ and $\eps>0$, $ \CIC^{\memls, \oscoins}_{\calU_0,\eps,k}(AND) = \Theta_\eps\left(\frac{1}{k}\right).$
\end{corollary}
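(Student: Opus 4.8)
The plan is to establish the two directions of the $\Theta_\eps(1/k)$ bound separately. The lower bound $\Omega_\eps(1/k)$ is precisely what comes out of combining Theorem~\ref{mland:thm:oneshotres} with the lower half of Theorem~\ref{Theorem:CIC_pure} (namely Proposition~\ref{prop:ICeps}); the upper bound $O_\eps(1/k)$ will instead require exhibiting a genuine one-shot-coin protocol. For the lower bound, fix any $k$-round memoryless one-shot-coin protocol $\Pi$ computing $\AND$ with error $\eps$ on $\calU_0$ and set $c = \CIC_{\calU_0}(\Pi)$. Theorem~\ref{mland:thm:oneshotres} hands us a coin-free $k$-round memoryless protocol $\Pi'$ with the same behaviour (so it still computes $\AND$ with error $\eps$) and $\CIC_{\calU_0}(\Pi') = O\big(c\,(\log k + |\log c|)\big)$. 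Since $\Pi'$ uses no coins, Proposition~\ref{prop:ICeps} forces $\CIC_{\calU_0}(\Pi') \ge \CIC^{\memls}_{\calU_0,\eps,k}(\AND) \ge \frac{(1-2\eps)^2}{12}\cdot\frac{\log k}{k}$. Chaining these gives an inequality of the shape $\frac{A\log k}{k} \le C\,c\,(\log k + |\log c|)$, with $A = \Theta\big((1-2\eps)^2\big)$ and $C$ an absolute constant coming from the $O(\cdot)$ in Theorem~\ref{mland:thm:oneshotres}.

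It then remains to solve this inequality for $c$. If $c \ge 1/k$ we are already done, so suppose $c < 1/k$; then $|\log c| > \log k$, whence $\log k + |\log c| < 2|\log c|$ and the inequality simplifies to $\frac{A\log k}{k} \le 2C\,c\log(1/c)$. The function $c \mapsto c\log(1/c)$ is increasing on $(0,1/e)$, so if we assume toward a contradiction that $c < \frac{A}{8C}\cdot\frac1k$ (which lies below $1/e$ once $k$ is large), monotonicity gives $c\log(1/c) \le \frac{A}{8Ck}\log\frac{8Ck}{A} \le \frac{A}{8Ck}\cdot 2\log k = \frac{A\log k}{4Ck}$, where the middle step uses $\log(8C/A) \le \log k$ for $k$ large. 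Substituting back yields $\frac{A\log k}{k} \le \frac{A\log k}{2k}$, a contradiction. Hence $c \ge \frac{A}{8C}\cdot\frac1k$ for all large $k$; absorbing the finitely many small $k$ into the constant gives $c = \Omega_\eps(1/k)$, and taking the infimum over $\Pi$ gives $\CIC^{\memls,\oscoins}_{\calU_0,\eps,k}(\AND) = \Omega_\eps(1/k)$.

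For the matching upper bound I would first observe that merely discarding the coins only yields $\CIC^{\memls,\oscoins}_{\calU_0,\eps,k}(\AND) \le \CIC^{\memls}_{\calU_0,\eps,k}(\AND) = O(\log k/k)$, which is a logarithmic factor too large; so beyond the two theorems one must construct an explicit one-shot-coin protocol with $\CIC_{\calU_0} = O_\eps(1/k)$. The conceptual reason such a protocol can exist is that the $\log k$ factor in Theorem~\ref{Theorem:CIC_pure} is forced only for \emph{pure} messages: Corollary~\ref{lem:improvedTraceIC} gives the strong lower bound $I(X:M) \ge h_2\big(\tfrac14\Delta^2\big) \approx \Delta^2\log(1/\Delta)$ for pure states, whereas a mixed message is constrained only by the weaker Corollary~\ref{mland:coro:mutualInfoLB-usual}, $I(X:M) \ge 1-F$, and can in fact attain information $O(\Delta^2)$ at trace distance $\Delta$ (a simple two-point classical mixture already shows this). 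One-shot coins are exactly what let a sender transmit such a mixed message. Concretely I would start from $\Pi_{\AND}$ (Figure~\ref{fig:algorithmAND}) and dephase the register sent at each round with a fresh one-shot coin, aiming to arrange that (i) Bob's final computational-basis measurement statistics are untouched, so correctness on every input, including $(1,1)$, is preserved, while (ii) each intermediate message, as seen under $\calU_0$ (i.e. conditioned on $Y=0$), becomes a mixture at trace distance $\Theta(1/k)$ but with per-round information $O(1/k^2)$, summing to $O(1/k)$.

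The main obstacle lies entirely on this upper-bound side: reconciling mixing with correctness. A uniform random phase at every round is too crude --- kept small enough to preserve the coherent signal accumulation it removes essentially none of the $\log$, while large enough to genuinely mix it collapses the signal needed for the $(1,1)$ output --- so the dephasing must act on an auxiliary degree of freedom or be tailored to the asymmetry that only the $Y=0$ branch feeds $\CIC$ whereas correctness is tested only at $(1,1)$. Proving simultaneously the $O(1/k)$ information bound (by a round-by-round application of Corollary~\ref{mland:coro:mutualInfoLB-usual}) and worst-case correctness of this construction is the crux; by contrast the lower-bound combination above is routine.
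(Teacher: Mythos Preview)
Your lower-bound argument is exactly the paper's: combine the transformation of Theorem~\ref{mland:thm:oneshotres} with the lower half of Theorem~\ref{Theorem:CIC_pure} (Proposition~\ref{prop:ICeps}) and solve the resulting inequality $c\,(\log k + |\log c|) \gtrsim (\log k)/k$ for $c$. The paper leaves that last algebraic step implicit; you spell it out correctly.

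On the upper bound you are in fact more careful than the paper. The paper simply says the corollary follows from ``combining the above with Theorem~\ref{Theorem:CIC_pure}'' and provides nothing further; but as you observe, that combination yields only $\CIC^{\memls,\oscoins}_{\calU_0,\eps,k}(\AND) \le \CIC^{\memls}_{\calU_0,\eps,k}(\AND) = O((\log k)/k)$, one logarithmic factor off from the claimed $O(1/k)$. No explicit one-shot-coin protocol achieving $O(1/k)$ is given or cited in the paper, so either the $\Theta$ is slightly loose or the authors had a construction in mind that they did not write down. Your diagnosis of why the $\log$ should disappear---that the extra $\log$ in the memoryless lower bound comes from the \emph{pure-state} bound of Corollary~\ref{lem:improvedTraceIC}, whereas one-shot coins allow mixed messages governed only by Corollary~\ref{mland:coro:mutualInfoLB-usual}---is the right intuition, and dephasing $\Pi_{\AND}$ with fresh coins is a natural plan. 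But, as you yourself say, you have not produced a concrete scheme that simultaneously preserves correctness on input $(1,1)$ and drops each round's information to $O(1/k^2)$. So your proposal is sound and complete on the lower bound, and on the upper bound it correctly locates a gap that is also present (though unacknowledged) in the paper, without closing it.
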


This also implies that one-shot coins are very different than private coins in the quantum setting, since for the AND function, private coins allow for a private protocol, while one-shot coins do not.

\subsection{Proof of Theorem~\ref{mland:thm:oneshotres}}\label{mland:sec:oneshot-proof}

As we said, we will transform a memoryless quantum protocol $\Pi$ with one-shot coins into a memoryless quantum protocol $\Pi'$ without any coins, such that $\Pi'$ will have the same outputs than $\Pi$ and without increasing the information cost too much. The transformation from $\Pi$ to $\Pi'$ will informally be the following:
\begin{enumerate}
	\item Quantize the coins from $\Pi$ \ie put them in quantum superposition in quantum registers $\widetilde{R}^A$ and $\widetilde{R}^B$.
	\item At each odd (or even) round, Alice (or Bob) applies the same transformation as in $\Pi$. Then, Alice (or Bob) would like to send all their quantum registers, including the coin registers, to the other player. Before doing that, Alice (or Bob) applies a compensation unitary that will limit the information cost increase that occurs because of the sending of all the quantum registers. 
\end{enumerate}

Step 1 is pretty straightforward. Step 2 will require a way to perform this compensation unitary, which will result in an increase in the information proportional to the binary entropy $h_2$ at each round. While $h_2$ is not convex, we show an weak-convexity result which will allow us to limit the total increase for all the rounds. The next subsection will deal with those two preliminary results.

\subsubsection{Two useful lemmata}
\begin{lemma} \label{Lemma:PurificationLemma}
	Consider a state 
	$ \rho = \frac{1}{2} \sum_{x \in \zo} \akb{x}_{\space{X}} \otimes \akb{\phi_x}_{\space{AB}} $
	for some pure states $\ket{\phi_x}$, $x \in \zo$. There exist two unitary operators $U_0,U_1$ acting on $\space{A}$ such that if we define 
	$$ \rho' = \frac{1}{2} \sum_{x \in \zo} \akb{x}_{\space{X}} \otimes (U_x \otimes I_B) \akb{\phi_x}_{\space{AB}} (U_x^\dagger \otimes I_B), $$
	we have $I(X:AB)_{\rho'} \le h_2\left( \frac{I(X:B)_{\rho}}{2} \right)$. 
\end{lemma}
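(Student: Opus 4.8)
The plan is to pick $U_0,U_1$ so as to maximise the overlap of the two resulting conditional pure states, to identify that maximal overlap with a fidelity via Uhlmann's theorem, and then to invoke Lemma~\ref{lem:mi-fid}. Write $\rho_B^x = \Tr_A \akb{\phi_x}$ for $x\in\zo$, so that $I(X:B)_\rho$ is exactly the mutual information of the classical-quantum state $\frac12\sum_x \akb{x}_X \otimes \rho_B^x$.

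First I would observe that for \emph{any} unitaries $U_0,U_1$ on $A$, the state $\rho'$ has precisely the form to which Lemma~\ref{Lemma:PurePinsker} applies, with pure states $\ket{\phi_x'} := (U_x\otimes I_B)\ket{\phi_x}$; hence
$$ I(X:AB)_{\rho'} = h_2\!\left(\frac{1 - |\braket{\phi_0'}{\phi_1'}|}{2}\right). $$
Since $\frac{1-|\braket{\phi_0'}{\phi_1'}|}{2}\in[0,1/2]$ always and $h_2$ is increasing on $[0,1/2]$, it suffices to choose $U_0,U_1$ maximising $|\braket{\phi_0'}{\phi_1'}|$ and to bound the resulting quantity from above.

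For the maximisation I would note
$$ \max_{U_0,U_1}\, |\braket{\phi_0'}{\phi_1'}| \;=\; \max_{V}\, \bigl| \bra{\phi_0}(V\otimes I_B)\ket{\phi_1}\bigr|, $$
where $V=U_0^\dagger U_1$ ranges over all unitaries on $A$. As $V$ varies, $(V\otimes I_B)\ket{\phi_1}$ ranges over all purifications of $\rho_B^1$ inside $A\otimes B$, and since $\ket{\phi_0}$ is a fixed purification of $\rho_B^0$ in the same space, Uhlmann's theorem (in its one-sided form) identifies this maximum with $F(\rho_B^0,\rho_B^1)$. Fixing $U_0,U_1$ that achieve it, we obtain $I(X:AB)_{\rho'} = h_2\!\bigl(\frac{1-F(\rho_B^0,\rho_B^1)}{2}\bigr)$.

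Finally, applying Lemma~\ref{lem:mi-fid} to the encoding $x\mapsto\rho_B^x$ gives $I(X:B)_\rho \ge 1 - F(\rho_B^0,\rho_B^1)$, i.e. $\tfrac12\bigl(1-F(\rho_B^0,\rho_B^1)\bigr) \le \tfrac12 I(X:B)_\rho$. Both sides lie in $[0,1/2]$ (using $I(X:B)_\rho \le S(X) = 1$), so monotonicity of $h_2$ on $[0,1/2]$ yields $I(X:AB)_{\rho'} \le h_2\bigl(\tfrac12 I(X:B)_\rho\bigr)$, which is the claim. I expect the only delicate point to be the Uhlmann step: one must check that a unitary on $A$ alone already reaches the Uhlmann-optimal purification, which holds here precisely because $\ket{\phi_0}$ and $\ket{\phi_1}$ are by hypothesis purifications already living in $A\otimes B$, so $\dim A$ is large enough for the one-sided version to apply.
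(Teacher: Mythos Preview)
Your proposal is correct and follows essentially the same route as the paper: choose the unitaries via Uhlmann's theorem so that the overlap of the rotated pure states equals $F(\rho_B^0,\rho_B^1)$, apply Lemma~\ref{Lemma:PurePinsker} to compute $I(X:AB)_{\rho'}$ exactly, and finish with Lemma~\ref{lem:mi-fid} together with the monotonicity of $h_2$ on $[0,1/2]$. The paper simply sets $U_0=I$ and takes $U_1=V$ to be the Uhlmann unitary directly, which is a special case of your $V=U_0^\dagger U_1$; your final caveat about the ``one-sided'' Uhlmann step is not really an issue here, since $\ket{\phi_1}$ already lives in $A\otimes B$ and all purifications of $\rho_B^1$ in that space are related by unitaries on $A$.
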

\begin{proof}
	For each $x$,  let $\rho_x = Tr_{\space{A}} \akb{\phi_x}$. By Uhlmann's theorem, there exists a unitary $V$ acting on $\space{A}$ such that $\bra{\phi_0}  V  \ket{\phi_1} = F(\rho_0,\rho_1)$. Let $U_0 = I$ and $U_1 = V$. We have
	$$
	I(X:AB)_{\rho'} = h_2\left(\frac{1 - |\bra{\phi_0}  V  \ket{\phi_1}|}{2}\right) = h_2\left(\frac{1 - F(\rho_0,\rho_1)}{2}\right) \le h_2\left(\frac{I(X:B)_{\rho}}{2}\right),
	$$
where the first equality is by  Lemma~\ref{Lemma:PurePinsker}, the second equality is by definition of $V$, and the inequality is by Lemma~\ref{lem:mi-fid}.
\end{proof}

\begin{lemma} \label{Lemma:BinaryEntropyExtendedConvenxity}
	Let $x_1,\dots,x_n \in [0,1]$ and let $S = \sum_{i} x_i$.
	We have
	$$ \sum_{i = 1}^n h_2(x_i) \le O(S\log(n) + S | \log(S)|)$$
	with the convention that $0 \log(0) = 0$.
\end{lemma}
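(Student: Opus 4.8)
The plan is to split the binary entropy into its two summands, $h_2(x) = x\log(1/x) + (1-x)\log\frac{1}{1-x}$, and treat each separately. For the second summand I would first establish the elementary inequality
$$-(1-x)\log(1-x) \le (\log e)\, x \qquad \text{for all } x\in[0,1],$$
which follows from $\ln(1/t)\le 1/t - 1$ applied with $t = 1-x$ (so that $t\ln(1/t)\le 1-t$), and then dividing by $\ln 2$. Summing over $i$ this contributes at most $(\log e)\,S$ to $\sum_i h_2(x_i)$.

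The heart of the argument is to recognize the first summand's contribution $\sum_i x_i\log(1/x_i)$ as a rescaled Shannon entropy. Assuming $S>0$ (the case $S=0$ being trivial since then all $x_i=0$), set $p_i = x_i/S$, so that $(p_i)_i$ is a probability distribution on $n$ points. A direct computation gives
$$\sum_{i=1}^n x_i\log\frac{1}{x_i} \;=\; S\sum_{i=1}^n p_i\Bigl(\log\tfrac1S + \log\tfrac1{p_i}\Bigr) \;=\; S\log\frac1S + S\,H(p),$$
where $H(p) = -\sum_i p_i\log p_i \le \log n$. If $S\le 1$ then $\log(1/S) = |\log S|\ge 0$ and we bound this by $S|\log S| + S\log n$; if $S>1$ then $\log(1/S)<0$ and the sum is at most $S\,H(p)\le S\log n \le S\log n + S|\log S|$. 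In both cases $\sum_i x_i\log(1/x_i) \le S\log n + S|\log S|$.

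Combining the two parts yields $\sum_{i=1}^n h_2(x_i) \le S\log n + S|\log S| + (\log e)\,S$. For $n\ge 2$ we have $\log n\ge 1$, hence $(\log e)\,S \le (\log e)\,S\log n$, and the right-hand side is $O(S\log n + S|\log S|)$ as claimed (the only use of the lemma in the paper is with $n=k$, so $n\ge 2$ is harmless; the degenerate case $n=1$ can be absorbed by replacing $\log n$ with $\log(n+1)$ if desired). I do not expect any serious obstacle here: the only mild subtlety is the bookkeeping of the sign of $\log(1/S)$ when $S$ exceeds $1$, and keeping track that the stray additive $(\log e)\,S$ term is dominated by $S\log n$ once $n\ge 2$. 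The one genuinely non-obvious move — and the reason the statement is true despite $h_2$ not being concave in a way that would let one apply Jensen directly — is the passage to the normalized distribution $p_i = x_i/S$, which converts an awkward sum of non-concave terms into $S$ times an entropy plus the correction $S\log(1/S)$.
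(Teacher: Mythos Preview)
Your proof is correct and takes a genuinely different route from the paper's. The paper chooses a threshold $U = 2n/S$ and bounds $h_2(x)$ pointwise by a three-way case analysis (large $x\ge 1/2$, medium $x\in[1/U,1/2]$, small $x\le 1/U$), arriving at the uniform estimate $h_2(x) \le 2|\log U|\,(x + 1/U)$, which summed over $i$ gives $2|\log U|\,(S + n/U) = 3S|\log(2n/S)|$. Your approach instead splits $h_2$ into its two summands, dispatches the $(1-x)\log\frac{1}{1-x}$ part by the linear bound $(\log e)\,x$, and---this is the conceptually nice step---recognizes $\sum_i x_i\log(1/x_i)$ as $S\log(1/S) + S\cdot H(p)$ for the normalized distribution $p_i = x_i/S$, so that the universal bound $H(p)\le\log n$ does all the work. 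Your argument is shorter and more structural, explaining \emph{why} the $\log n$ appears; the paper's argument is more elementary in that it never invokes Shannon entropy, at the cost of a somewhat ad hoc threshold and case split. Both approaches share the same harmless caveat at $n=1$ (where $S\log n$ vanishes but $h_2(x_1)$ need not be $O(S|\log S|)$ as $x_1\to 1$); as you note, this does not affect the application in the paper.
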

\begin{proof}
	If $S=0$, then $x_i = 0$ for all $i$ hence the property is true. From now on, assume $S \neq 0$.
	Let $U = \frac{2n}{S}$ and $x \in [0,1]$. Since $U  \ge 2$, we have 
	\begin{itemize}
		\item if $x \ge \frac{1}{2}$, then $h_2(x) \le 1 \le 2x$
		\item if $x \in [\frac{1}{U},\frac{1}{2}]$, then $h_2(x) \le 2x|\log(x)| \le 2x|\log(U)|$
		\item if $x \in [0,\frac{1}{U}]$, then $h_2(x) \le h_2(\frac{1}{U}) \le 2\frac{|\log(U)|}{U}$.
	\end{itemize}
Since $U \ge 2$, for all $x \in [0,1]$, we have $2x \le 2x|\log(U)|$ hence 
\begin{align}
h_2(x) \le \max \left\{2x|\log(U)|,2\frac{|\log(U)|}{U} \right\} \le 2|\log(U)|\left(x + \frac{1}{U}\right).
\end{align}
Using the above for each $x_i$ and summing over $i$, we obtain 
\begin{align*}
\sum_{i = 1}^n h_2(x_i) \le 2|\log(U)| \left(S + \frac{n}{U}\right).
\end{align*}
Plugging the value of $U$ in the above expression yields 
$$ \sum_{i = 1}^n h_2(x_i) \le 3 S \left| \log\left( \frac{2n}{S} \right) \right| \le O(S\log(n) + S | \log(S)|).$$
\end{proof}

\subsubsection{Constructing $\Pi'$ from $\Pi$} 
We now prove Theorem~\ref{mland:thm:oneshotres}. 

\begin{proof}[Proof of Theorem~\ref{mland:thm:oneshotres}]
Consider a memoryless quantum protocol $\Pi$, described as in Section \ref{Section:Memoryless_Definitions}.  Recall that:
$$ \CIC_{\calU_0}(\Pi) = \sum_{i \ odd} I(X:M_i|Y=0,R^B) + \sum_{i \ even} I(Y:M_i|X=0,R^A).$$

Because the coins are independent of the inputs, we have 

\begin{align}\label{Eq:OriginalCIC}
\CIC_{\calU_0}(\Pi) & = \sum_{i \ odd} I(X:M_iR^B|Y=0) + \sum_{i \ even} I(Y:M_iR^A|X=0) \\
& \ge \sum_{i \ odd} I(X:M_i|Y=0) + \sum_{i \ even} I(Y:M_i|X=0).
\end{align}

\paragraph{Construction of $\Pi'$.} 
We start by quantizing the coins of $\Pi$ using two quantum registers $\widetilde{R}^A$ and $\widetilde{R}^B$. 
\begin{itemize}
	\item At the beginning of the protocol, Alice starts with registers $\widetilde{R}^A \otimes M_0 \otimes \widetilde{R}^B$.
	\item At each odd round $i$, Alice performs, as in $\Pi$, the isometry $U_i$ on $\widetilde{R}^A_i   M_{i-1}$ to obtain $\widetilde{R}^A   M_i$. This isometry uses the coin register only as a control string.  Let $\rho_i$ be the state of registers $XY\widetilde{R}^A_{\le i}   \widetilde{R}^B_{\le i}   M_i    \widetilde{R}^A_{> i} \widetilde{R}^B_{> i}$. Alice holds all of them except $Y$. The registers $\widetilde{R}^A_{> i}   \widetilde{R}^B_{> i}$ have not been used yet by any player and contain a pure state independent of the players' other registers. Therefore, $I(X:\widetilde{R}^A_{> i}   \widetilde{R}^B_{> i} | Y=0) = 0$ and $I(X:M_i   \widetilde{R}^A_{> i}   \widetilde{R}^B_{> i} | Y=0)_{\rho_i} =  I(X:M_i | Y=0)_{\rho_i}$. Hence, we can apply Lemma~\ref{Lemma:PurificationLemma}, and obtain the existence of a unitary operation $V^x_i$ on $\widetilde{R}^A_{\le i}   \widetilde{R}^B_{\le i}$ such that the resulting state \\ $\rho'_i = (I_{XY} \otimes V^x_i \otimes I_{M_i    \widetilde{R}^A_{> i} \widetilde{R}^B_{> i}}) \rho_i (I_{XY} \otimes V^x_i \otimes I_{M_i    \widetilde{R}^A_{> i} \widetilde{R}^B_{> i}})^\dagger$ satisfies
	$$
	I(X:\widetilde{R}^A_{\le i}   \widetilde{R}^B_{\le i}   M_i    \widetilde{R}^A_{> i}   \widetilde{R}^B_{> i} | Y=0)_{\rho'_i} \le h_2\left( \frac{I(X:M_i \widetilde{R}^A_{> i} \widetilde{R}^B_{> i} | Y=0)_{\rho_i}}{2} \right) = h_2\left( \frac{I(X:M_i | Y=0)_{\rho_i}}{2} \right). 
	$$ 
	\item At each even round $i$, we do exactly the same thing from Bob's side. The resulting state $\rho'_i$ of $\Pi'$ after step $i$ satisfies 
	$$
	I(Y:\widetilde{R}^A_{\le i}   \widetilde{R}^B_{\le i}   M_i    \widetilde{R}^A_{> i}   \widetilde{R}^B_{> i} | X=0)_{\rho'_i} \le h_2\left( \frac{I(Y:M_i \widetilde{R}^A_{> i} \widetilde{R}^B_{> i} | X=0)_{\rho_i}}{2} \right) = h_2\left( \frac{I(Y:M_i | X=0)_{\rho_i}}{2} \right). 
	$$
	\item Protocol $\Pi'$ acts on the message registers similarly as in $\Pi$ and Bob outputs therefore the same as in $\Pi$.
\end{itemize}

Let us now calculate the $\CIC$ of our new protocol $\Pi'$. Let $k$ the number of rounds of $\Pi$ (and hence $\Pi'$). 

\begin{align*}
\CIC_{\calU_0}(\Pi') & = \sum_{i \ odd} I(X:\widetilde{R}^A_{\le i}   \widetilde{R}^B_{\le i}   M_i    \widetilde{R}^A_{> i}   \widetilde{R}^B_{> i} | Y=0)_{\rho'_i} + \sum_{i \ even} I(Y:\widetilde{R}^A_{\le i}   \widetilde{R}^B_{\le i}   M_i    \widetilde{R}^A_{> i}   \widetilde{R}^B_{> i} | X=0)_{\rho'_i} \\
& \le \sum_{i \ odd} h_2\left( \frac{I(X:M_i | Y=0)_{\rho_i}}{2} \right) + \sum_{i \ even} h_2\left( \frac{I(Y:M_i | X=0)_{\rho_i}}{2} \right) \\
\end{align*}

Let $x_i = I(X:M_i | Y=0)_{\rho_i}$ for $i$ odd and $x_i = I(Y:M_i | X=0)_{\rho_i}$ for $i$ even. Notice that $\rho_i$ coincides with the state in $\Pi$ on registers $X,M_i,Y$ so those terms are exactly those of Equation \ref{Eq:OriginalCIC}, which gives $S := \sum_i x_i \le \CIC_{\calU_0}(\Pi).$ Using Lemma \ref{Lemma:BinaryEntropyExtendedConvenxity}, we can conclude that 
$$ \CIC_{\calU_0}(\Pi') \le O(S\log(k) + S | \log(S)|) \le O\Big(\CIC_{\calU_0}(\Pi) \cdot \Big( \log(k) +  \left| \log \CIC_{\calU_0}(\Pi)\right| \Big) \Big).$$
\end{proof}

\paragraph{Acknowledgements} 
The authors would like to thank Rahul Jain, Virginie Leray, Ashwin Nayak and Dave Touchette for fruitful discussions. I. K. was supported by the ERC project QCC. 

\bibliography{memoryless-qcc}


\end{document}